\newcommand{\C}{{\mathbb C}}
\newcommand{\N}{{\mathbb N}}
\newcommand{\R}{{\mathbb R}}
\newcommand{\cA}{{\mathcal A}}
\newcommand{\cE}{{\mathcal E}}
\newcommand{\cJ}{{\mathcal J}}
\newcommand{\cH}{{\mathcal H}}
\newcommand{\cN}{{\mathcal N}}
\newcommand{\cC}{{\mathcal C}}
\newcommand{\cS}{{\mathcal S}}
\newcommand{\cP}{{\mathcal P}}
\newcommand{\cV}{{\mathcal V}}
\newcommand{\cR}{{\mathcal R}}
\newcommand{\SU}{\mathrm{SU}}
\renewcommand{\O}{\mathrm{O}}
\newcommand{\SL}{\mathrm{SL}}
\newcommand{\SO}{\mathrm{SO}}
\newcommand{\U}{\mathrm{U}}
\newcommand{\su}{{\mathfrak{su}}}
\renewcommand{\u}{{\mathfrak{u}}}
\renewcommand{\o}{{\mathfrak{o}}}
\newcommand{\be}{\begin{equation}}
\newcommand{\ee}{\end{equation}}
\newcommand{\beq}{\begin{eqnarray}}
\newcommand{\eeq}{\end{eqnarray}}
\newcommand{\bes}{\begin{eqnarray}}
\newcommand{\ees}{\end{eqnarray}}
\newcommand{\bea}{\begin{eqnarray}}
\newcommand{\eea}{\end{eqnarray}}
\newcommand{\nn}{\nonumber}
\newcommand{\mat} [2] {\left ( \begin{array}{#1}#2\end{array} \right ) }
\newcommand{\bin} [2] {\left (\begin{array}{c}#2\\#1\end{array} \right ) }
\newcommand{\la}{\langle}
\newcommand{\ra}{\rangle}
\newcommand{\w}{\wedge}
\newcommand{\tr}{{\mathrm Tr}}
\newcommand{\f}{\frac}
\def\nn{\nonumber}
\def\pp{\partial}
\def\arr{\rightarrow}
\newcommand{\id}{\mathbb{I}}
\def\vphi{\varphi}
\def\eps{\epsilon}
\def\om{\omega}
\def\Om{\Omega}
\def\bz{\bar{z}}
\def\bzeta{\bar{\zeta}}
\def\bF{\bar{F}}
\def\bE{\bar{E}}
\def\bU{\bar{U}}
\def\bbeta{\bar{\beta}}
\def\bw{\bar{w}}
\def\vcC{\vec{\cC}}
\def\vsigma{\vec{\sigma}}
\def\vu{\vec{u}}
\def\vn{\vec{n}}
\def\vv{\vec{v}}
\def\hv{\widehat{v}}
\def\vV{\vec{V}}
\def\tI{\tilde{I}}
\def\tz{\tilde{z}}
\def\btz{\bar{\tilde{z}}}
\def\tlambda{\tilde{\lambda}}
\def\tX{\widetilde{X}}
\def\tV{\widetilde{V}}
\def\he{\hat{e}}
\def\wg{\textrm{Wg}}
\newtheorem{theorem}{Theorem}[section]
\newtheorem{lemma}[theorem]{Lemma}
\newtheorem{prop}[theorem]{Proposition}
\newtheorem{res}[theorem]{Result}
\newtheorem{definition}[theorem]{Definition}
\begin{document}

\title{Deformations of Polyhedra and Polygons by the Unitary Group}

\author{{\bf Etera R. Livine}}\email{etera.livine@ens-lyon.fr}
\affiliation{Laboratoire de Physique, ENS Lyon, CNRS-UMR 5672, 46 All\'ee d'Italie, Lyon 69007, France}
\affiliation{Perimeter Institute, 31 Caroline St N, Waterloo ON, Canada N2L 2Y5}

\date{\today}

\begin{abstract}

We introduce the set of framed (convex) polyhedra with $N$ faces as the symplectic quotient $\C^{2N}//\SU(2)$. A framed polyhedron is then parametrized by $N$ spinors living in $\C^{2}$ satisfying suitable closure constraints and defines a usual convex polyhedron plus extra $\U(1)$ phases attached to each face. We show that there is a natural action of the unitary group $\U(N)$ on this phase space, which changes the shape of faces and allows to map any (framed) polyhedron onto any other with the same total (boundary) area. This identifies the space of framed polyhedra to the Grassmannian space $\U(N)/\,(\SU(2)\times\U(N-2))$.
We show how to write averages of geometrical observables (polynomials in the faces' area and the angles between them)  over the ensemble of polyhedra (distributed uniformly with respect to the Haar measure on $\U(N)$) as polynomial integrals over the unitary group and we provide a few methods to compute these integrals systematically. We also use the 
Itzykson-Zuber formula from matrix models as the generating function for these averages and correlations.

In the quantum case, a canonical quantization of the framed polyhedron phase space leads to the Hilbert space of $\SU(2)$ intertwiners (or, in other words, $\SU(2)$-invariant states in tensor products of irreducible representations). The total boundary area as well as the individual face areas are quantized as half-integers (spins), and the Hilbert spaces for fixed total area form irreducible representations of $\U(N)$. We define semi-classical coherent intertwiner states peaked on classical framed polyhedra and transforming consistently under $\U(N)$ transformations. And we show how the $\U(N)$ character formula for unitary transformations is to be considered as an extension of the Itzykson-Zuber to the quantum level and generates the traces of all polynomial observables over the Hilbert space of intertwiners.

We finally apply the same formalism to two dimensions and show that classical (convex) polygons can be described in a similar fashion trading the unitary group for the orthogonal group. We conclude with a discussion of the possible (deformation) dynamics that one can define on the space of polygons or polyhedra.
This work is a priori useful in the context of discrete geometry but it should hopefully also be relevant to (loop) quantum gravity in 2+1 and 3+1 dimensions when the quantum geometry is defined in terms of gluing of (quantized) polygons and polyhedra.

\end{abstract}

\maketitle

%%%%%%%%%%%%%%%%%%%%%%%%%%%%%%%%%%%%%%%%%%%%%%%%%%%%%%%%%%%%%%%%%%%%%%%%%%%%%%%%%%%%%%%

%%%%%%%%%%%%%%%%%
\section{Introduction}
%%%%%%%%%%%%%%%%%

Inspired by loop quantum gravity \cite{lqg_review}, and more particularly the spinorial formalism \cite{spinor1,spinor2,spinor3}  and the structures of twisted geometry \cite{twisted1}, we discuss the phase space of polyhedra in three dimensions and its quantization, which serves as basic building of the kinematical states of discrete geometry. More precisely, following \cite{un1}, we show that the Grassmannian space $\U(N)/(\U(N-2)\times \SU(2))$ is the space of framed (convex) polyhedra with $N$ faces up to 3d rotations. The framing consists in the additional information of a $\U(1)$ phase per face. This provides an extension of the Kapovich-Milson phase space \cite{KM} for polyhedra with fixed number of faces and fixed areas for each face. Indeed, we describe the Grassmannian as the symplectic quotient $\C^{2N}//\SU(2)$, which provides canonical complex variables for the Poisson bracket. This construction allows a natural $\U(N)$ action on the space of polyhedra, which has two main features. First, $\U(N)$ transformations act non-trivially on polyhedra and change the area and shape of each individual face. Second, this action is cyclic: it allows to go between any two polyhedra with fixed total area (sum of the areas of the faces) and in particular to generate any polyhedron from the totally squeezed polyhedron with only two non-trivial faces.

Upon quantization, the framed polyhedron phase space leads to the Hilbert space of $\SU(2)$ intertwiners, which is interpreted as the space of quantum polyhedra. We perform a canonical quantization from the complex variables of $\C^{2N}//\SU(2)$ and all the classical features are automatically exported to the quantum level. Each face carries now a irreducible representation of $\SU(2)$, i.e. a half-integer spin $j$, which defines the area of the face. Intertwiners are then $\SU(2)$-invariant states in the tensor product of these irreducible representations. These intertwiners are the basic building block of the spin network states of quantum geometry in loop quantum gravity. The $\U(N)$ action on the space of intertwiners changes the spins of the faces and each Hilbert space for fixed total area (sum of the spins) defines an irreducible representation of the unitary group $\U(N)$, as shown in \cite{un1}. Once again, the $\U(N)$ action is cyclic and allows to generate the whole Hilbert space from the action of $\U(N)$ transformation on the highest weight vector. This construction provides coherent intertwiner states peaked on classical polyhedra, as used in \cite{un2}.

\medskip

At the classical level, we will use the $\U(N)$ structure of the space of polyhedra to compute the averages of polynomial observables over the ensemble of polyhedra distributed along the uniform Haar measure. We will underline a phenomenon of concentration of measure, which peaks random polyhedra on spherical configurations for large number of faces $N$. Furthermore, we will show how to use the Itzykson-Zuber formula from matrix models \cite{iz1} as a generating functional for these averages. It computes the integral over $\U(N)$ of the exponential of the matrix elements of a unitary matrix tensor its complex conjugate.
At the quantum level, we will show that the character formula, giving the trace of unitary transformations either over the standard basis or the coherent intertwiner basis, provides an extension of the Itzykson-Zuber formula. It allows in principle to generate the expectation values of all polynomial observables (and thus their spectrum).

\medskip

The plan of this paper goes as follows. In section II, we define and describe the phase space of framed polyhedra, its parameterization in terms of spinor variables and the action of $\U(N)$ transformations. In section III, we show how to compute the averages and correlations of polynomial observables using group integrals over $\U(N)$ and we discuss the Itzykson-Zuber integral as generating function. In section IV, we discuss the quantum case, with the Hilbert space of $\SU(2)$ intertwiners, coherent states and the character formula. In section V, we investigate the lower-dimensional analog of polygons (in two dimensions), we show that the unitary group is replaced by the orthogonal group and that the Grassmannian $\O(N)/(\O(N-2)\times\SO(2))$ defines the phase space for framed polygons. We then discuss the issue of gluing such polygons together into a consistent 2d cellular decomposition, as a toy model for the gluing of framed polyhedra into 3d discrete manifolds.

%{\bf ADD EXTRA SECTIONS IF NEEDED}

\medskip

These constructions are relevant to quantum gravity in 2+1 and 3+1 dimensions, especially to discrete approaches based on a description of the geometry using glued polygons and polyhedra such as loop quantum gravity (and dynamical triangulations). The goal is to clarify how to parametrize the set of polygons/polyhedra and their deformations, and to introduce mathematical tools to compute the average and correlations of observables over the ensemble of polygons/polyhedra at the classical level and then the spectrum and expectation values of geometrical operators on the space of quantum polygons/polyhedra at the quantum level.

In this context, we hope that this work will be useful to the study of the dynamics of (loop) quantum gravity, especially in its formulation in terms of spinor networks and twisted geometries, but it should also be relevant to the study of the structure of discrete geometries and cellular decompositions.

%%%%%%%%%%%%%%%%%
\section{Phase Space of Polyhedra and Unitary Group Action}
%%%%%%%%%%%%%%%%%

%%%%%
\subsection{A Quick Review of the Kapovich-Milson Phase Space}
%%%%%

Let us consider $N$ vectors $\vV_i$ in $\R^3$ that satisfy a closure condition, that their sum vanishes:
\be
\sum_{i=1}^N \vV_i \,=\,0.
\ee
By a theorem due to Minkowski, these determine a unique convex polyhedron with $N$ faces, such that the $\vV_i$'s are the (outward) normal vectors to the faces, that is the faces have area $V_i=|\vV_i|\in\R_+$ and unit normal  $\hv_i=\vV_i/|\vV_i|\in\cS^2$. The reconstruction of the polyhedron is not trivial and the shape of the faces depend non-trivially on the set of chosen vectors. The interested reader can find details on the reconstruction algorithm in \cite{dona}.

The space of polyhedra $\cP_N=\{(\vv_i)\, |\, \sum_{i=1}^N \vV_i =0\}$ has dimension $(3N-3)$, and if we consider the set of equivalence classes under 3d rotations we get the space $\cP_N^0$ with dimension  $(3N-6)$. Generally, these spaces do not have an even dimension and are not symplectic manifolds. However, if we fix the areas $V_i$ of all $N$ faces, we get the Kapovich-Milson phase space \cite{KM}:

\begin{definition}

Let us consider the space product of $N$ 2-spheres for fixed $V_i$'s:
\be
\cS_N^{\{V_i\}}
\,\equiv\,
\{(\vV_i)_{i=1..N}\,\in\,(\R^3)^N\, s.t.\, |\vV_i|=V_i\}
\,\sim\,
\{(\hv_i)_{i=1..N}\,\in\,(\cS^2)^N\}\,.
%\,=\,
%(\cS^2)^N\,.
\ee
This is a symplectic manifold provided with the Poisson structure on each of the $N$ spheres (scaled by their radii):
\be
\{\cdot,\cdot\}
\,=\,
\sum_i\eps^{abc}V_i^c\f{\pp\,\cdot}{\pp V_i^a}\f{\pp\,\cdot}{\pp V_i^b},
\qquad
\{V_i^a,V_i^b\}=2\eps^{abc}V_i^c,
\qquad
\{V_i,V_i^a\}0,
\qquad
\{v_i^a,v_i^b\}=\,\f1{V_i}2\eps^{abc}v_i^c\,.
\ee
Then the closure conditions $\sum_iV_i^a = \sum_i V_iv_i^a=0$ form a first class constraint system, that generates global $\SO(3)$ rotations on the set of the $N$ vectors $\vV_i$, or equivalently on the set of the $N$ unit vectors $\hv_i$.
This defines by symplectic reduction the Kapovich-Milson phase space for convex polyhedra with $N$ faces and fixed face areas $V_i$:
\be
\cP_N^{\{V_i\}}
\,\equiv\,
\cS_N^{\{V_i\}}//\SO(3)
\,=\,
\{(\vv_i)_{i=1..N}\,\in\,(\cS^2)^N\, s.t.\, |\vV_i|=V_i\}/\SO(3)\,.
\ee
This manifold has dimension $(2N-6)$.

\end{definition}

Instead of removing $N$ degrees of freedom from the space of polyhedra $\cP_N^r$ by fixing the individual face areas and thus obtaining the manifold $\cP_N^{\{V_i\}}$ with even dimension $(3N-6)-N$ and carrying a symplectic structure, we will now add $N$ degrees of freedom to embed $\cP_N^0$ into a larger phase space of framed polyhedra $\cP_N^z$ with even dimension $(3N-6)+N$. These extra degrees of freedom are angles (or $\U(1)$ phases) canonically conjugate to the face areas. They allow to work in a phase space where the areas can vary and have a dynamics. This is a necessary structure when studying the dynamics of loop quantum gravity, where areas and spins do change under time evolution (and space-time diffeomorphisms).
We achieve this below by using the spinorial representation of the $\su(2)$ algebra \`a la Schwinger as prescribed in \cite{un1,un2,spinor1,spinor2,spinor3,twisted1}.

%%%%%
\subsection{Spinor Phase Space for Framed Polyhedra}
%%%%%

We will now replace the data of $N$ vectors in $\R^3$ by $N$ spinors.
We call a spinor a complex 2-vector $z\in C^2$ for which we will use a bra-ket notation:
$$
|z\ra =\mat{c}{z^0 \\ z^1},
\qquad
\la z| =\mat{cc}{\bz^0 \\ \bz^1}\,.
$$
It lives in the fundamental 2-dimensional representation of $\SU(2)$, with the obvious scalar product $\la w|z\ra= \bw^0 z^0+\bw^1 z^1$. We also introduce its dual spinor using the structure map of $\SU(2)$:
$$
|z]=\eps |\bz\ra =\mat{c}{-\bz^1 \\ \bz^0},
\qquad
[ z| =\mat{cc}{-z^1 \\ z^0}\,.
$$

Following \cite{un1,un2}, we consider sets of $N$ spinors satisfying a closure constraint:

\begin{definition}

Let us consider the space $\C^{2N}$  of $N$ spinors $z_i\in\C^2$ endowed with the canonical symplectic structure $\{z^A_i,\bz^B_j\}=-i\,\delta^{AB}\,\delta_{ij}$ with the indices $A,B=0,1$.
We impose the closure constraints that the 2$\times$ 2 matrix $X\,\equiv\,\sum_i |z_i\ra\la z_i|$ is proportional to the identity:
\be
\vcC
\equiv
\tr X\vsigma
\,=\,
\sum_i \la z_i|\vsigma|z_i\ra
\,=\,
0\,,
\qquad\textrm{or equivalently}\quad
\sum_i |z_i\ra\la z_i|
\,=\,
\f12\,
\sum_i \la z_i|z_i\ra
\,\id\,,
\ee
$$
\textrm{or explicitly}\qquad
\sum_i|\tz_i^0|^2-|\tz_i^1|^2=0
\quad\textrm{and}\quad
\sum \btz_i^0\tz_i^1=0\,,
$$
where the three matrix $\sigma_{a=1,2,3}$ are the Pauli matrices generating $\SU(2)$.
These three real constraints are first class and generate the $\SU(2)$ action on the $N$ spinors:
$$
\{\vcC\,,\,|z_i\ra\}
\,=\,
i\,\vsigma\,|z_i\ra,
\qquad
e^{\{\vu\cdot\vcC,\cdot\}}\,|z_i\ra
\,=\,
g\,|z_i\ra\,,
\quad
e^{\{\vu\cdot\vcC,\cdot\}}\,|z_i]
\,=\,
g\,|z_i]\,,
\quad
g=e^{i\,\vu\cdot\vsigma}\in\SU(2)\,.
$$
We define the phase space of framed polyhedra with $N$ faces as the symplectic quotient $\cP_N^z\equiv \C^{2N}//\SU(2)$, that is as the set of collections of $N$ spinors satisfying the closure constraints and up to $\SU(2)$ transformations.

\end{definition}

A simple counting gives that $\cP_N^z$ is a $(4N-6)$-dimensional manifold, which corresponds to the dimension $(3N-6)$ of the space $\cP_N^0$ of $N$-faced polyhedra (up to 3d rotations) plus $N$ degrees of freedom.

More precisely, we introduce the mapping from spinors to 3-vectors:
\be
|z\ra\in\C^2
\quad\longmapsto\quad
\vV\equiv\la z|\vsigma|z\ra\,\in\R^3\,,
\ee
with $V=|\vV|=\la z|z\ra$.
This mapping is obviously not one-to-one and is actually invariant under the multiplication of the spinor by an arbitrary phase, $|z\ra\,\rightarrow\,e^{i\theta}\,|z\ra$. The inverse mapping is given by \cite{spinor1,un3}:
\be
|z\ra \,=\,
e^{i\theta}\,\f1{\sqrt{2}}\,\,
\mat{c}{\sqrt{V+V_z}\\ e^{i\vphi}\sqrt{V-V_z}},
\qquad\textrm{with}\quad
e^{i\vphi}=\f{V_x+iV_y}{\sqrt{V_x^2+V_y^2}}=\f{V_x+iV_y}{\sqrt{V^2-V_z^2}}\,.
\ee
This provides a bijection $\C^2\sim \R^3\times\U(1)$. One checks that we have the same Poisson brackets for the vectors as earlier, $\{V^a,V^b\}=2\eps^{abc}V^c$, inherited from the canonical bracket on the spinor variables.

Using this mapping, we send a collection of $N$ spinors onto a collection of $N$ vectors. The closure constraint then read as before:
$$
\vcC = \sum_i \vV_i =0\,.
$$
This defines a convex polyhedron with $N$ faces with areas given by the norm squared of the spinors $|V_i|=\la z_i|z_i\ra$, with a total area $\cA=2\lambda\equiv\sum_i |V_i|=\sum_i \la z_i|z_i\ra$ overall.

This mapping provides a bijection $\cP_N^z\sim \cP_N^0 \times\U(1)^N$ between our space of framed polyhedra defined in terms of spinors and the space $\cP_N^0$ of polyhedra with $N$ faces up to 3d rotations times $N$ phases attached to each face \cite{un1,un2}. This construction provides a larger phase space where the areas of the faces can vary dynamically.
Moreover the spinors are crucial in defining the action of the unitary group $\U(N)$ on the (framed) polyhedra as we will see in the next sections.

\medskip

Finally, we conclude this section by introducing the complex variable $\zeta=z^1/z^0\in\C$ for a spinor $|z\ra$. This variable $\zeta$ commutes with the norm $V$ and parameterizes the 2-sphere defined by the 3-vector $\vV$ as $|z\ra$ varies while keeping the radius $V=\la z|z\ra$ fixed:
\be
\{V,\zeta\}\,=\,0,\qquad
\vV=\,V\,\hv \quad\textrm{with}\quad
v_z=\f{1-|\zeta|^2}{1+|\zeta|^2},\quad
v_+=\f{\zeta}{1+|\zeta|^2}\,.
\ee
The symplectic structure on the 2-sphere then simply reads in terms of this complex parameter:
\be
\{\zeta,\bzeta\}
\,=\,
\f{-i}{V}\,\big{(}1+|\zeta|^2\big{)}\,.
\ee
This variable is specially interesting when studying the Kapovich-Milson phase space for fixed individual face ares, when the phase space is parametrized by these complex variables $\zeta_{i=1..N}$ constrained by the closure condition.

%%%%%
\subsection{Closing Open Polyhedra and the $\SL(2,\C)$ Action}
%%%%%

A first interesting remark is that the use of spinors provide a natural way to close opened configurations into actual polyhedra.
As pointed out in \cite{un2,un3,un4}, this is achieved through a $\SL(2,\C)$ transformation on the spinors.

Indeed, starting with an arbitrary set of (not all vanishing) $N$ spinors $|z_i\ra$, a priori not satisfying the closure constraints, that is such that the matrix $X=\sum_i |z_i\ra\la z_i|$ is not proportional to the identity. Then $X$ is a positive Hermitian operator, it can be diagonalized and written as:
\be
X=\sum_i |z_i\ra\la z_i|
= g\Delta g^{-1}
=\rho \Lambda\Lambda^\dagger,
\ee
where  $g\in\SU(2)$ is unitary,  $\Delta$ is a diagonal 2$\times$2 matrix with positive entries, $\rho=\det X=\det \Delta$ is positive, $\Lambda=g\sqrt{\Delta}/\rho^{\f14}\in\SL(2,\C)$. Then we act with $\Lambda^{-1}\in\SL(2,\C)$ on the spinors to get a closed configuration:
\be
|z_i\ra
\,\overset{\Lambda^{-1}}{\longrightarrow}\,
|\tz_i\ra
\,\equiv\,
\Lambda^{-1}|z_i\ra\,.
\ee
These new spinors $|\tz_i\ra$ trivially satisfy the closure constraints:
$$
\tX
=\sum_i |\tz_i\ra\la \tz_i|
=\Lambda^{-1}X(\Lambda^\dagger)^{-1}
=\rho\id\,,
$$
and thus define a (framed) polyhedron with face areas $\tV_i=\la \tz_i|\tz_i\ra$ and total area:
\beq
&&%\tcA
2\tlambda
=\sum_i \tV_i= \tr \tX= 2\rho,\\
%\qquad
&&\rho^2=\det X
=\f12\big{[}(\tr X)^2-\tr X^2\big{]}
=\f14\big{[}(\tr X)^2-\tr (X\vsigma)\cdot\tr (X\vsigma)\big{]}
%=\f14\left[\cA^2-|\vcC|^2\right]
=\f14\left[(2\lambda)^2-|\vcC|^2\right].\nn
\eeq
This new total area $2\tlambda$ is always smaller than the initial one $2\lambda$ and obviously coincides when the original spinors already satisfy the closure condition $\vcC=0$.

\medskip

It is useful to get a closer at the geometry of this procedure. Starting with the $N$ vectors $\vV_i$ with a non-vanishing sum $\vcC\ne 0$, we perform a $\SU(2)$ transformation $g$ on the spinors $|z_i\ra$ such that the corresponding $\S0(3)$ rotation sends the vector $\vcC$ onto the $z$-axis:
$$
\vcC\,=\,|\vcC|\,g\vartriangleright \hat{e}_z\,,
$$
where $\hat{e}_z$ is the unit basis vector along $z$-axis.
Up to this 3d rotation, we can start directly with such a configuration with $\vcC$ collinear with the $z$-axis. Then writing the components of the matrix $X$ in terms of the spinors give equations corresponding to the total area and the components of the closure vector:
\be
\tr X=\sum_i |z_i^0|^2+|z_i^1|^2=2\lambda,
\quad
\tr X\sigma_z= \cC^z= \sum_i |z_i^0|^2-|z_i^1|^2=|\vcC|,
\quad
\tr X\sigma_+=\cC^+=\sum \bz_i^0z_i^1=0\,.
\ee
We now defines the rescaled spinors:
\be
|z_i\ra \,\arr\,
|\tz_i\ra=\,\mat{cc}{\mu & 0 \\ 0 & \mu^{-1}}\,|z_i\ra
=\,\Lambda^{-1}\,|z_i\ra
\quad \textrm{with}\,
\mu=\sqrt{\f{\lambda-\f{|\vcC|}2}{\lambda+\f{|\vcC|}2}}\,,
\ee
or explicitly:
$$
z_i^0\,\arr\,\tz_i^0=\sqrt{\f{\lambda-\f{|\vcC|}2}{\lambda+\f{|\vcC|}2}}z_i^0,
\quad
z_i^1\,\arr\,\tz_i^1=\sqrt{\f{\lambda+\f{|\vcC|}2}{\lambda-\f{|\vcC|}2}}z_i^1.
$$
First, the new spinors $|\tz_i\ra$ satisfy the balance equation $\sum_i |\tz_i^0|^2=\sum_i|\tz_i^1|^2$ and the orthogonality equation $\sum \btz_i^0\tz_i^1=0$, and thus satisfy the closure condition.
They define a closed polyhedron with total area $2\tlambda=\sum_i |\tz_i^0|^2+|\tz_i^1|^2= \sqrt{4\lambda^2-{|\vcC|^2}}$.

Second, the rescaling matrix $\Lambda$ is in $\SL(2,\C)$ and is actually a boost along the $z$-direction. And we understand the overall $\SL(2,\C)$ transformation from the original arbitrary spinors $|z_i\ra$ to the new closed spinors $|\tz_i\ra$ as a rotation to the $z$-axis followed by a rescaling of the first and second component of the spinors with inverse factor so that the sum of their modulus square match.

%%%%%
\subsection{Invariant Parametrization and Cross-Ratios}
\label{param}
%%%%%

%Write $E$ and $F$; note $\SL(2,\C)$-invariance; introduce cross-ratios.

The spinors $z_i^A$ do not commute with the closure constraints $\vcC=0$ and are thus not invariant under $\SU(2)$ transformations. The first question is to identify $\SU(2)$-invariant observables, which can then be used to parameterize framed polyhedra in the  phase space $\cP_N^z$.

Natural observables are given by the scalar products between spinors and their dual:
\be
E_{ij}=\la z_i|z_j\ra\,\quad
\bE_{ij}=E_{ji},\qquad
F_{ij}=[z_i|z_j\ra = -[z_j|z_i\ra,
\quad
\bF_{ij}=\la z_j|z_i] = -\la z_i|z_j]\,,
\ee
These scalar products commute with the closure constraints,
$$
\{\vcC,E_{ij}\}=\{\vcC,F_{ij}\}=\{\vcC,\bF_{ij}\}=0
$$
and are thus invariant under $\SU(2)$ transformations of the spinors,
$$
|z_i\ra\,\overset{g\in\SU(2)}{\longrightarrow}\,g\,|z_i\ra\,,
\quad
|z_i]\,{\longrightarrow}\,g\,|z_i]\,,
\quad
E_{ij}=\la z_i|z_j\ra\,{\longrightarrow}\,\la z_i|g^\dagger g|z_j\ra=\la z_i|z_j\ra,
\quad
F_{ij}=[z_i|z_j\ra\,{\longrightarrow}\,[z_i|g^\dagger g|z_j\ra=[ z_i|z_j\ra\,.
$$
These are the basic variables for the $\U(N)$ formalism for $\SU(2)$ intertwiners as developed for loop quantum gravity in \cite{un1,un2,un3,un4,un0}. From that perspective, the most useful feature is that these variables form a closed algebra under the Poisson bracket,
\bes
{\{}E_{ij},E_{kl}\}&=&
-i\left(\delta_{kj}E_{il}-\delta_{il}E_{kj} \right)\label{E_un}\\
{\{}E_{ij},F_{kl}\} &=& -i\left(\delta_{il}F_{jk}-\delta_{ik}F_{jl}\right),\qquad
{\{}E_{ij},\bF_{kl}\} = -i\left(\delta_{jk}\bF_{il}-\delta_{jl}\bF_{ik}\right), \\
{\{} F_{ij},\bF_{kl}\}&=& -i\left(\delta_{ik}E_{lj}-\delta_{il}E_{kj} -\delta_{jk}E_{li}+\delta_{kl}E_{li}\right), \nn\\
{\{} F_{ij},F_{kl}\} &=& 0,\qquad {\{} \bF_{ij},\bF_{kl}\} =0.\nn
\ees
This algebra will get quantized exactly and will provide the basic operators acting on the Hilbert space of intertwiners.

The usual vector scalar products $\vV_i\cdot\vV_j$, measuring the angles between two faces, are easily expressed in terms of these variables,
\be
\vV_i\cdot\vV_i=V_i^2=\la z_i|z_i\ra^2=E_{ii}^2,\quad
\vV_i\cdot\vV_j
\,=\,
2|E_{ij}|^2-V_iV_j
\,=\,
-2|F_{ij}|^2+V_iV_j\,.
\ee
One can write all observables probing the geometric of the polyhedra in terms of $E$'s or $F$'s. We can then use these variables to parameterize the space of (framed) polyhedra. On the one hand, the $E$'s are most particularly relevant because they generate $\U(N)$ transformations compatible with the closure conditions on the spinors. We will use this to define the action of the unitary group $\U(N)$ on polyhedra in the next section. On the other hand, the $F$'s are holomorphic and offer a enlightening parametrization of the framed polyhedron phase space as we explain below. Moreover, they are crucial in defining coherent intertwiners \cite{un2} and in deriving the holomorhic/anti-holomorphic splitting of the simplicity (second class) constraints in loop quantum gravity \cite{un3,simplicityL}.

The $F$'s are specially interesting because they are not only invariant under global $\SU(2)$ transformations but they are also invariant under global $\SL(2,\C)$ transformations, as it is easy to check:
$$
|z_i\ra\,\overset{\Lambda\in\SL(2,\C)}{\longrightarrow}\,\Lambda\,|z_i\ra\,,
\quad
|z_i]\,{\longrightarrow}\,\eps\bar{\Lambda}\eps^{-1}\,|z_i]=(\Lambda^{-1})^\dagger\,|z_i]\,,
\quad
F_{ij}=[z_i|z_j\ra\,{\longrightarrow}\,[z_i|\Lambda^{-1}\Lambda|z_j\ra=[ z_i|z_j\ra\,.
$$
Thus the action of closing an arbitrary set of spinors into a (framed) polyhedron, as described in the previous section, will leave the $F$'s invariant. We can go further and show that the $F$'s entirely determine the orbit under $\SL(2,\C)$ in the space of unconstrained spinors $\C^{2N}$:

\begin{lemma}
Considering two sets of spinors $|z_i\ra$ and $|w_i\ra$ such that $[z_i|z_j\ra=[w_i|w_j\ra$ for all indices $i,j$, and further assuming that there exists a couple of indices $k,l$ such that $[z_k|z_l\ra\ne0$, then there exists a matrix $\Lambda\in\SL(2,\C)$ that maps one onto the other:
\be
\forall i,j,\quad [z_i|z_j\ra=[w_i|w_j\ra
\qquad\Rightarrow\qquad
\exists \,\Lambda\in\SL(2,\C),\quad
\forall i,\,\, |z_i\ra=\Lambda\,|w_i\ra\,.
\ee
\end{lemma}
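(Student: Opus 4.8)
The plan is to construct $\Lambda$ explicitly from a well-chosen pair of spinors and then verify it works on all the others. The key observation is that the bracket $[z_i|z_j\ra$ is an antisymmetric bilinear form; concretely, if we write $|z_i\ra = \binom{z_i^0}{z_i^1}$ then $[z_i|z_j\ra = z_i^0 z_j^1 - z_i^1 z_j^0$, which is exactly the $2\times 2$ determinant $\det(|z_i\ra,|z_j\ra)$. So the hypothesis $[z_k|z_l\ra\ne 0$ says precisely that $|z_k\ra$ and $|z_l\ra$ form a basis of $\C^2$, and likewise $[w_k|w_l\ra = [z_k|z_l\ra \ne 0$ forces $|w_k\ra, |w_l\ra$ to be a basis as well.

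\medskip

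First I would fix the distinguished indices $k,l$ and define $\Lambda$ to be the unique linear map sending $|w_k\ra \mapsto |z_k\ra$ and $|w_l\ra\mapsto|z_l\ra$ (well-defined since the $|w\ra$'s form a basis). I would then check $\Lambda\in\SL(2,\C)$: since the bracket is just the determinant, $\det\Lambda \cdot [w_k|w_l\ra = [\Lambda w_k|\Lambda w_l\ra = [z_k|z_l\ra = [w_k|w_l\ra$, and because $[w_k|w_l\ra\ne 0$ we get $\det\Lambda = 1$. Next, for an arbitrary index $i$, expand $|w_i\ra$ in the basis $\{|w_k\ra,|w_l\ra\}$ using Cramer's rule: the coefficients are ratios of determinants, i.e.\ of brackets, $|w_i\ra = \frac{[w_i|w_l\ra}{[w_k|w_l\ra}|w_k\ra - \frac{[w_i|w_k\ra}{[w_k|w_l\ra}|w_l\ra$ (up to signs, which I would pin down carefully). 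Applying $\Lambda$ gives $\Lambda|w_i\ra = \frac{[w_i|w_l\ra}{[w_k|w_l\ra}|z_k\ra - \frac{[w_i|w_k\ra}{[w_k|w_l\ra}|z_l\ra$. Now invoke the hypothesis $[w_i|w_l\ra=[z_i|z_l\ra$, $[w_i|w_k\ra=[z_i|z_k\ra$, $[w_k|w_l\ra=[z_k|z_l\ra$ to rewrite this as $\frac{[z_i|z_l\ra}{[z_k|z_l\ra}|z_k\ra - \frac{[z_i|z_k\ra}{[z_k|z_l\ra}|z_l\ra$, which is exactly the Cramer expansion of $|z_i\ra$ in the basis $\{|z_k\ra,|z_l\ra\}$ (valid since that bracket is nonzero). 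Hence $\Lambda|w_i\ra = |z_i\ra$ for every $i$.

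\medskip

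The main obstacle I anticipate is purely bookkeeping rather than conceptual: getting the signs and index placements right in the Cramer-rule expansion, since $[z_i|z_j\ra$ is antisymmetric and the notation $[z|$ vs $|z\ra$ has to be handled consistently. One should also treat the degenerate sub-case where the chosen $|z_i\ra$ (equivalently $|w_i\ra$) is a scalar multiple of one of the basis spinors, or vanishes — but the expansion above covers these automatically since then one of the bracket coefficients is zero. A secondary point worth a sentence is uniqueness of $\Lambda$: it is forced on the two-dimensional span of $\{|w_k\ra,|w_l\ra\}=\C^2$, so $\Lambda$ is in fact unique (this is consistent with the fact that the $F$'s are a complete set of $\SL(2,\C)$ invariants on the open dense locus where some bracket is nonzero, and $\SL(2,\C)$ acts freely there). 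I would close by remarking that combined with the closing procedure of the previous subsection, this lemma shows the $F_{ij}$ together with the total area essentially coordinatize $\cP_N^z$.
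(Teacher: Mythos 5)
Your proof is correct and follows essentially the same route as the paper's: the map you define abstractly by $|w_k\ra\mapsto|z_k\ra$, $|w_l\ra\mapsto|z_l\ra$ is exactly the paper's explicit $\Lambda=\big(|z_l\ra[w_k|-|z_k\ra[w_l|\big)/[w_k|w_l\ra$, and your Cramer-rule expansion of $|w_i\ra$ in the basis $\{|w_k\ra,|w_l\ra\}$ is the paper's resolution-of-identity step $\big(|z_l\ra[z_k|-|z_k\ra[z_l|\big)/[z_k|z_l\ra=\id_2$. The only (harmless) cosmetic difference is that you get $\det\Lambda=1$ from multiplicativity of the determinant under the bracket, whereas the paper computes it via $\det\Lambda=\tfrac12((\tr\Lambda)^2-\tr\Lambda^2)$.
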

\begin{proof}

Let us first remark that the following identity on 2$\times$2 matrices is true, taking into account that $[z_k|z_l\ra\ne0$:
\be
\f{|z_l\ra [z_k|-|z_k\ra [z_l|}{[z_k|z_l\ra}
\,=\,\id_2\,.
\ee
Indeed, $[z_k|z_l\ra\ne0$ implies that $|z_k\ra$ and $|z_l\ra$ are not colinear and span the whole two-dimensional spinor space. Then the previous operator leaves invariant $|z_k\ra$ and $|z_l\ra$ and is thus equal to the identity.
Let us now consider the matrix:
\be
\Lambda\,\equiv\,
\f{|z_l\ra [w_k|-|z_k\ra [w_l|}{[w_k|w_l\ra}\,.
\ee
One checks that its determinant is equal to one, $\det \Lambda=\f12((\tr\Lambda)^2-\tr\Lambda^2) =1$, so that $\Lambda\in\SL(2,\C)$. Finally, using the equality of the $F$-observables for both sets of spinors, we have:
$$
\forall i,\quad
\Lambda\,|w_i\ra
\,=\,
\f{|z_l\ra [w_k|w_i\ra-|z_k\ra [w_l|w_i\ra}{[w_k|w_l\ra}
\,=\,
\f{|z_l\ra [z_k|z_i\ra-|z_k\ra [z_l|z_i\ra}{[z_k|z_l\ra}
\,=\,
|z_i\ra\,.
$$

\end{proof}

Furthermore each $\SL(2,\C)$-orbit has a unique intersection with the space of framed polyhedra $\cP^z_N$.  This is a re-statement of the isomorphism $\C^{2N}/\SL(2,\C)\sim \C^{2N}//\SU(2)$, where $\SL(2,\C)$ is understood as the complexification of $\SU(2)$. This is similar to the analysis performed in \cite{closure} but the present setting is slightly more general (and actually simpler) since the authors were looking at the Kapovich-Milson phase spaces (at fixed individual face areas). We formalize this as follows:

\begin{prop}

Considering two sets of spinors $|z_i\ra$ and $|w_i\ra$ satisfying the closure constraints, and such that $[z_i|z_j\ra=[w_i|w_j\ra$ for all indices $i,j$, then they are related by a global $\SU(2)$ transformation that maps one set of spinors onto the other:
\be
\exists \,g\in\SU(2),\quad
\forall i,\,\, |z_i\ra=g\,|w_i\ra\,.
\ee

\end{prop}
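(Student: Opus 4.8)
The plan is to reduce the Proposition to the Lemma just proved, and then show that the ambiguity in the connecting matrix $\Lambda$ collapses to $\SU(2)$ once both spinor sets are closed. First I would dispose of the degenerate case: if all $F_{ij}=[z_i|z_j\ra$ vanish, then the same is true for the $w$'s, and vanishing of all $F_{ij}$ together with the closure constraint forces all the spinors to be proportional to a single spinor (a "squeezed" configuration supported on one direction); I would check directly that two such closed families with equal $E$'s or equal norms are related by an $\SU(2)$ rotation. In the generic case there exist $k,l$ with $[z_k|z_l\ra\ne 0$, and since $[w_k|w_l\ra=[z_k|z_l\ra\ne 0$ the Lemma applies and yields $\Lambda\in\SL(2,\C)$ with $|z_i\ra=\Lambda|w_i\ra$ for all $i$.

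The core step is then to argue that this $\Lambda$ is actually unitary. The idea is to use both closure constraints. Writing $X_w=\sum_i|w_i\ra\la w_i|$ and $X_z=\sum_i|z_i\ra\la z_i|$, the hypothesis that both sets close means $X_w=\rho_w\,\id$ and $X_z=\rho_z\,\id$ for positive scalars $\rho_w,\rho_z$. On the other hand, $|z_i\ra=\Lambda|w_i\ra$ gives $X_z=\Lambda X_w\Lambda^\dagger=\rho_w\,\Lambda\Lambda^\dagger$. Hence $\rho_w\,\Lambda\Lambda^\dagger=\rho_z\,\id$, i.e. $\Lambda\Lambda^\dagger=(\rho_z/\rho_w)\,\id$. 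Taking determinants and using $\det\Lambda=1$ forces $(\rho_z/\rho_w)^2=1$, so $\rho_z=\rho_w$ and $\Lambda\Lambda^\dagger=\id$: that is, $\Lambda\in\SU(2)$. Setting $g=\Lambda$ finishes the proof.

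The step I expect to be the main (though mild) obstacle is the degenerate case $F_{ij}\equiv 0$, since there the Lemma gives nothing and one must argue by hand. Here I would note that $[z_i|z_j\ra=0$ for all $i,j$ means every $|z_j\ra$ is annihilated by every $[z_i|$; picking any nonzero $|z_{i_0}\ra$, all $|z_j\ra$ are proportional to it, say $|z_j\ra=c_j|u\ra$ with $\|u\|=1$. The closure constraint $\sum_i|z_i\ra\la z_i|=\big(\sum_j|c_j|^2\big)|u\ra\la u|$ is proportional to the identity only if $\sum_j|c_j|^2=0$, i.e. all $c_j=0$, so in fact the fully degenerate closed configuration is trivial and the claim is vacuous; the genuinely "squeezed" polyhedra with two nontrivial faces still have some $F_{ij}\ne 0$ and fall under the generic argument. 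So the only real content is the determinant bookkeeping above, and the proof is essentially a three-line computation once the Lemma is invoked.
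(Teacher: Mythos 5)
Your proof is correct, and the main step is handled by a genuinely different (and arguably slicker) mechanism than the paper's. Both arguments dispose of the fully degenerate case (all $F_{ij}=0$ forces all spinors to vanish under closure — your rank-one argument and the paper's identity $\sum_{i,j}|F_{ij}|^2=2\lambda^2$ reach the same conclusion), and both then invoke the Lemma to obtain $\Lambda\in\SL(2,\C)$ with $|z_i\ra=\Lambda|w_i\ra$. Where you diverge is in showing this $\Lambda$ is unitary: you compare the two closure matrices via $X_z=\Lambda X_w\Lambda^\dagger$, so $\rho_w\,\Lambda\Lambda^\dagger=\rho_z\,\id$, and the determinant condition $\det\Lambda=1$ together with $\rho_w,\rho_z>0$ forces $\Lambda\Lambda^\dagger=\id$; this is a clean three-line argument that makes transparent why closure on both sides is exactly what upgrades $\SL(2,\C)$ to $\SU(2)$. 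The paper instead first proves the intermediate identity $\la z_i|z_j\ra=\f1\lambda\sum_m\la z_i|z_m][z_m|z_j\ra=\la w_i|w_j\ra$ (i.e.\ the $E$-observables are determined by the $F$'s plus closure) and then writes down the rotation explicitly as $g_k=(|z_k\ra\la w_k|+|z_k][w_k|)/\sqrt{\la w_k|w_k\ra\la z_k|z_k\ra}$, verifying directly that it maps each $w_i$ to $z_i$. What each buys: your route is shorter and conceptually minimal; the paper's route yields an explicit closed-form expression for the group element and the independently useful fact that the Hermitian scalar products $E_{ij}$ are reconstructible from the holomorphic ones on the constraint surface. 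The only cosmetic quibble is your phrase that the degenerate case is ``vacuous'' — it is not vacuous but trivially true (any $g\in\SU(2)$ works since all spinors vanish), which is what you in fact establish.
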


\begin{proof}

To start with, assuming the closure constraints on the spinors $z_i$, one can get the total area $2\lambda=\sum_i V_i=\sum_i \la z_i|z_i\ra$ from the $F$'s:
\be
\sum_{i,j} |F^{(z)}_{ij}|^2
\,=\,
\sum_{i,j} [z_i|z_j\ra\la z_j|z_i]
\,=\,
\tr (\lambda \id)^2
\,=\,
2\lambda^2\,.
\ee
Thus the total area associated to both sets of spinors $z_i$ and $w_i$ are equal.
If $\lambda$ vanishes, then both sets of spinors vanish and are trivially related by an arbitrary $\SU(2)$ transformation. Else $\lambda$ does not vanish and there automatically exists at least a couple of indices $(k,l)$ such that $F^{(z)}_{kl}=[z_k|z_l\ra$ does not vanish, so that we can apply the previous lemma ensuring that both sets of spinors are related by a $\SL(2,\C)$ transformation. Then the work is to show that this $\SL(2,\C)$ transformation is actually unitary and lays in $\SU(2)$.

First, we show that all the scalar products are equal, by inserting the closure constraint:
\be
\forall i,j,\quad
\la z_i|z_j\ra
\,=\,
\f1\lambda \sum_m \la z_i|z_m][z_m|z_j\ra
\,=\,
\f1\lambda \sum_m \la w_i|w_m][w_m|w_j\ra
\,=\,
\la w_i|w_j\ra\,.
\ee

Then we fix one index $k$ and consider the $\SU(2)$ group element mapping $w_k$ to $z_k$:
\be
g_k\,\equiv\,
\f{|z_k\ra\la w_k|+|z_k][ w_k|}{\sqrt{\la w_k|w_k\ra\la z_k|z_k\ra}}\,.
\ee
And we check that it actually maps each $w_i$ to the corresponding $z_i$:
\be
\forall i,\quad
g_k\,|w_i\ra
\,=\,
\f{|z_k\ra\la w_k|w_i\ra+|z_k][ w_k|w_i\ra}{\sqrt{\la w_k|w_k\ra\la z_k|z_k\ra}}
\,=\,
\f{|z_k\ra\la z_k|z_i\ra+|z_k][ z_k|z_i\ra}{\la z_k|z_k\ra}
\,=\,
|z_i\ra\,.
\ee

\end{proof}

As a result, the  key point is that the $\SL(2,\C)$ invariant observables $F_{ij}$ entirely determine a unique (framed) polyhedron.

A na\"ive puzzle is that there are $N(N-1)/2$ such observables $F_{ij}$, thus giving $N(N-1)$ real parameters, while the space of framed polyhedra is of dimension $(4N-6)$.
This points to the fact that the $F$'s variables are not independent and satisfy the Pl\"ucker relations (which can be directly checked from their explicit definition in terms of the spinors):
\be
\label{plucker}
\forall i,j,k,l,\qquad
F_{ij}F_{kl}=F_{ik}F_{jl}-F_{il}F_{jk}\,.
\ee
Applying this to $k,l=1,2$, we get\footnotemark:
$$
F_{ij}F_{12}=F_{i1}F_{j2}-F_{i2}F_{j1}.
$$
This means that we can obtain all the $F_{ij}$ from the two $(N-2)$-dimensional complex vectors $F_{i1}$ and $F_{i2}$ (for $i\ge 3$) plus the scale factor $F_{12}$. This minimal data is defined in terms of $2(N-2)+1$ complex parameters thus $(4N-6)$ real parameters as expected.
\footnotetext{
Reversely, if the antisymmetric matrix $F_{ij}$ satisfies the Pl\"ucker relations for $k,l=1,2$ and arbitrary $i,j$, then it satisfies the full Pl\"ucker relations for an arbitrary quadruplet:
\beq
F_{ij}F_{kl}
&=&
\f{1}{F_{12}^2}(F_{i1}F_{j2}-F_{i2}F_{j1})(F_{k1}F_{l2}-F_{k2}F_{l1})\nn\\
&=&
\f{1}{F_{12}^2}
\Big{[}
(F_{i1}F_{k2}-F_{i2}F_{k1})(F_{j1}F_{l2}-F_{j2}F_{l1})
-(F_{i1}F_{l2}-F_{i2}F_{l1})(F_{j1}F_{k2}-F_{j2}F_{k1})
\Big{]}
\,=\,
F_{ik}F_{jl}-F_{il}F_{jk}\,.\nn
\eeq
}

This is illustrated by the fact that one can send by a $\SL(2,\C)$ transformation\footnotemark an arbitrary set of spinors $z_i$ on a new set of spinors such that the first two spinors are collinear to the complex vectors $(1,0)$ and $(0,1)$:
$$
\exists \Lambda\in\SL(2,\C),\qquad
\Lambda\,|z_1\ra=\sqrt{F_{12}}\mat{c}{1\\0},\quad
\Lambda\,|z_2\ra=\sqrt{F_{12}}\mat{c}{0\\1},\quad
\Lambda\,|z_{i\ge 3}\ra=\f1{\sqrt{F_{12}}}\mat{c}{-F_{i2}\\F_{i1}}\,.
$$
\footnotetext{
The $\SL(2,\C)$ used here on the space of spinors $\C^{2N}$ are slightly different than the ones previously used in \cite{closure,tetrahedron} to study the Kapovich-Millson phase space of polyhedra and the resulting Hilbert space of intertwiners. Working with fixed face areas, the phase space is the product of $N$ 2-spheres. Then one can use a (unique) $\SL(2,\C)$ transformation to send the first vector pointing to the north pole, the second to the south pole and the third on the equator along (say) the $x$-axis. The polyhedron is then described by the area $V_i$ of its faces plus the following cross-ratios (giving the direction of the remaining vectors on the unit 2-sphere after having acted with the $\SL(2,\C)$ transformation) best defined in terms of the complex variables $\zeta_i=z_i^1/z_i^0$:
$$
\forall i\ge4,\quad
Z_{i}\,\equiv\,
\f{\zeta_i-\zeta_1}{\zeta_3-\zeta_2}\,.
$$
This parametrizes the polyhedron in terms of the $N$ face areas plus $(N-3)$ complex cross-ratios, which gives the correct dimension, $N+2(N-3)=(3N-6)$.
These cross-ratios can be almost translated in the $F$-variables \cite{generating}, which hints towards an explicit link between the two considered $\SL(2,\C)$ actions:
\be
Z_{i\ge3}
\,=\,
\f{\zeta_i-\zeta_1}{\zeta_3-\zeta_2}
\,=\,
\f{F_{1i}}{F_{23}}
\f{z^0_2z^0_3}{z^0_1z^0_i}
%\qquad\rightarrow\qquad
%Z_{i}^{inv}
%\,\equiv\,
%\f{F_{1i}}{F_{23}}\,.
\ee
The $\SL(2,\C)$ action, the fibration of the phase space in terms of $\SL(2,\C)$-orbits and the parametrization in terms of cross-ration turned out powerful when constructing coherent intertwiner states and studying the integration measure over them \cite{closure,tetrahedron,generating}. In particular, it hints towards a link between coherent intertwiner states and conformal field theory. The possible reformulation of our spinor phase space in terms of conformal field theory is postponed to future investigation.
}

%%%%%
\subsection{The Cyclic $\U(N)$ Action}
%%%%%

We now come to the key tool of this paper: $\U(N)$ transformations acting on framed polyhedra with $N$ faces.
Following \cite{un2,un3,spinor}, we introduce the natural action of the $\U(N)$ group on collections of $N$ spinors in $\C^{2N}$:
\be
\{z_i\}_{i=1..N}
\quad\longrightarrow
\{(Uz)_i=\sum_j U_{ij}z_j\}\,.
\ee
The key point is that this action commutes with the closure constraints:
$$
\sum_i |(Uz)_i\ra\la (Uz)_i|
\,=\,
\sum_{i,j,k} \overline{U_{ij}}U_{ik}\, |z_j\ra\la z_k|
\,=\,
\sum_{j,k} (U^\dagger U)_{jk}\, |z_j\ra\la z_k|
\,=\,
\sum_i |z_i\ra\la z_i|\,.
$$
Thus this induces an action of unitary group on the space $\cP_N^z$ of framed polyhedra. Moreover, taking the trace of the previous equation, we check that this action leaves invariant the total area of the polyhedron:
$$
\sum_i \la (Uz)_i|(Uz)_i\ra =\sum_i \la z_i|z_i\ra \,.
$$

Notice that this action does not simply act on the 3-vectors $\vV_i$ but also involves the individual phases of each spinor. Therefore we truly need the spinors and one can not simply define a $\U(N)$-action on the space of polyhedra $\cP_N$.

\medskip

At the infinitesimal level, this action is generated by the scalar products\footnotemark between the spinors \cite{un2,un3,spinor1}:
\be
E_{ij}=\la z_i|z_j\ra\,,
\qquad
\{E_{ij},|z_k\ra\}
\,=\,
i\,\delta_{ik}\,|z_j\ra\,,
\qquad
e^{i\{\sum_{i,j}\alpha_{ij}E_{ij},\cdot\}}\,|z_k\ra
\,=\,
|(e^{i\alpha}\,z)_k\ra\,,
\ee
where $e^{i\alpha}\in\U(N)$ if the matrix $\alpha$ is Hermitian. As we said in the previous section, these generators commute with the closure constraints generating the $\SU(2)$ transformations on the spinors, $\{\vcC,E_{ij}\}=0$,
confirming that $\U(N)$ transformations commute with the $\SU(2)$ action. Finally, we look at their Poisson bracket \eqref{E_un} and check that they form the expected $\u(N)$ Lie algebra.
\footnotetext{
We can also compute the action on the spinors generated by the observables $F_{ij}=[z_i|z_j\ra$. It is more complicated than for the $\U(N)$ transformation since it will mix the spinors with their dual (mixing holomorphic and anti-holomorphic components). A straightforward calculation gives for an anti-symmetric matrix $\beta$:
\beq
&&e^{\f i2\{\sum_{i,j}(\beta_{ij}F_{ij}+\bbeta_{ij}\bF_{ij})\,,\,\cdot\}}\,
|z_k\ra
\,=\,
\Big{(}\delta_{kj}+\f12 (\bbeta \beta)_{kj}+\f1{4!}(\bbeta \beta)^2_{kj}\Big{)}
\,|z_j\ra
+\Big{(}\delta_{ki}+\f1{3!} (\bbeta \beta)_{ki}+\f1{5!}(\bbeta\beta)^2_{ki}\Big{)}
\,\bbeta_{ij}\,|z_j]\,,\nn\\
&&e^{\f i2\{\sum_{i,j}(\beta_{ij}F_{ij}+\bbeta_{ij}\bF_{ij})\,,\,\cdot\}}\,
|z_k]
\,=\,
\Big{(}\delta_{kj}+\f12 (\beta\bbeta )_{kj}+\f1{4!}(\beta\bbeta )^2_{kj}\Big{)}
\,|z_j]
+\Big{(}\delta_{ki}+\f1{3!} (\beta\bbeta )_{ki}+\f1{5!}(\beta\bbeta )^2_{ki}\Big{)}
\,\beta_{ij}\,|z_j\ra\,.\nn
\eeq
Contrarily to the $\U(N)$ transformations, these do not leave invariant the total area of the polyhedron, as one can check directly from the Poisson brackets of the $F$'s and $\bF$'s with $2\lambda=\sum_k \la z_k|z_k\ra$:
\be
\{F_{ij}\,,\,\sum_k \la z_k|z_k\ra\}\,=\,
-2i\,F_{ij}\,,
\qquad
\{\bF_{ij}\,,\,\sum_k \la z_k|z_k\ra\}\,=\,
+2i\,\bF_{ij}\,. \nn
\ee
}

We further check that these generators commute with the total area of the polyhedron $2\lambda=\sum_i \la z_i|z_i\ra$, thus confirming that the total area is invariant under $\U(N)$ transformations.

\medskip

The key feature of this $\U(N)$-action on the space of framed polyhedron is that the action is cyclic. Indeed, we can reach any configuration up to a global scale factor from the completely degenerate and flat configuration by an arbitrary $\U(N)$ transformation. More precisely, we introduce the trivial reference point:
\be
|\Om_1\ra=\mat{c}{1\\0},
\quad
|\Om_2\ra=|\Om_1]=\mat{c}{0\\1},
\quad
|\Om_3\ra=..=|\Om_N\ra=0\,,
\ee
which obviously satisfies the closure constraints, $\sum_i |\Om_i\ra\la \Om_i|= \id$. The corresponding 3-vectors are the unit vector in the $z$-direction, $\vV_1=\hat{e}_z$, its opposite $\vV_2=-\vV_1=-\hat{e}_z$, and vanishing vectors $\vV_3=..\vV_N=0$, thus giving a completely-flat configuration defining a degenerate polyhedron.
Acting with an arbitrary $\U(N)$ transformation on this configuration gives:
$$
|(U\Om)_k\ra= \mat{c}{U_{k1}\\U_{k2}}\,.
$$
Reversely, considering from an arbitrary collection of $N$ spinors $\{z_i,\,i=1..N\}$ satisfying the closure constraints, we can rescale it so that it is of the form above:
\be
\label{z_U}
|z_k\ra
\,=\,
\sqrt{\lambda}\,\mat{c}{U_{k1}\\U_{k2}}
\,=\,
\sqrt{\lambda}\,|(U\Om)_k\ra\,,
\qquad
\lambda=\f12\sum_i \la z_i|z_i\ra\,.
\ee
This works because the closure constraints are equivalent to the fact that the first and second components of the spinors form two orthogonal complex $N$-vectors with equal norms:
$$
\vcC=0\quad\Longleftrightarrow\qquad
\sum_i \bz_i^0 z_i^1=0
\quad\textrm{and}\quad
\sum_i |z^0_i|^2=\sum_i |z^1_i|^2=\lambda\,,
$$
exactly the same as the first two columns, $(\sqrt{\lambda}\,U_{k1})_k$ and $(\sqrt{\lambda}\,U_{k2})$, of a unitary matrix $U\in\U(N)$ re-scaled by $\sqrt{\lambda}$.

\medskip

Moreover, the stabilizer group of the completely flat polyhedron clearly is $\U(N-2)$. Thus the set of collections of $N$ spinors satisfying the closure constraint is identified to the quotient $\U(N)/\U(N-2)$. Further quotienting by the action of $\SU(2)$ (to get equivalence classes of poyhedra under 3d rotations), this lead us to the following proposition as hinted in \cite{un1,un2}:

\begin{prop}
We have an action of the unitary group $\U(N)$ on the space $\cP^z_N$ of framed polyhedron with $N$ faces. This leads to an isomorphism between  $\cP^z_N=\C^{2N}//\SU(2)$ and the Grassmannian space $\U(N)\,/\,(\SU(2)\times\U(N-2))$. In particular, we have the equivalence for a set of spinors $z_i\in\C^{2N}$:
\be
\sum_i |z_i\ra\la z_i| \propto\id_2
\qquad\Longleftrightarrow\qquad
\exists\lambda\in\R_+,\,\,\exists U\in\U(N),\,\,
\forall i\,,\quad|z_i\ra=\sqrt{\lambda}\,|(U\Om)_i\ra
\ee
\end{prop}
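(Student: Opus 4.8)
The plan is to show the two directions of the claimed equivalence, treating the cyclic $\U(N)$-action as the bridge, and then observe that the stabilizer computation upgrades this to the Grassmannian isomorphism. The $(\Leftarrow)$ direction is immediate from work already done in the excerpt: if $|z_i\ra=\sqrt{\lambda}\,|(U\Om)_i\ra$ then, since the reference configuration $|\Om_i\ra$ satisfies $\sum_i|\Om_i\ra\la\Om_i|=\id_2$ and the $\U(N)$-action was shown to commute with the closure constraints, $\sum_i|z_i\ra\la z_i|=\lambda\sum_i|(U\Om)_i\ra\la(U\Om)_i|=\lambda\sum_i|\Om_i\ra\la\Om_i|=\lambda\,\id_2$, which is proportional to the identity.

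For the $(\Rightarrow)$ direction I would argue exactly as in the discussion preceding equation \eqref{z_U}: assume $\sum_i|z_i\ra\la z_i|=\lambda\,\id_2$ for some $\lambda\ge0$ (with $\lambda=\tfrac12\sum_i\la z_i|z_i\ra$ obtained by taking the trace). Reading off the matrix entries, this is equivalent to the two statements $\sum_i|z_i^0|^2=\sum_i|z_i^1|^2=\lambda$ and $\sum_i\bz_i^0 z_i^1=0$, i.e. the two column $N$-vectors $(z_i^0)_i$ and $(z_i^1)_i$ are orthogonal and have common norm $\sqrt{\lambda}$. If $\lambda=0$ all spinors vanish and any $U$ works; otherwise $(\lambda^{-1/2}z_i^0)_i$ and $(\lambda^{-1/2}z_i^1)_i$ are two orthonormal vectors in $\C^N$, so by Gram–Schmidt they can be completed to an orthonormal basis, whose vectors we take as the columns of a unitary matrix $U\in\U(N)$ with $U_{k1}=\lambda^{-1/2}z_k^0$ and $U_{k2}=\lambda^{-1/2}z_k^1$. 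Then $|z_k\ra=\sqrt{\lambda}\,(U_{k1},U_{k2})^{\mathrm t}=\sqrt{\lambda}\,|(U\Om)_k\ra$, as required.

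To conclude the Grassmannian identification I would package the above as: the map $\U(N)\to\{\text{closed spinor configs with }\lambda=1\}$, $U\mapsto(|(U\Om)_i\ra)_i$, is surjective (by $(\Rightarrow)$ with $\lambda=1$) and $\U(N)$-equivariant, and two unitaries $U,U'$ give the same configuration iff $U^{-1}U'$ fixes $|\Om_1\ra,|\Om_2\ra$ and hence lies in the stabilizer, which is manifestly $\U(N-2)$ (acting on the last $N-2$ coordinates). This realizes the space of closed configurations at fixed total area as $\U(N)/\U(N-2)$; including the scale factor $\lambda\in\R_+$ and then quotienting by the residual global $\SU(2)$ acting on the $\C^2$ index — which by the Proposition just proved above acts freely on the orbit data modulo the already-quotiented pieces — gives $\cP^z_N\cong\U(N)/(\SU(2)\times\U(N-2))$ up to the $\R_+$ scaling, matching the dimension count $N^2-(3+(N-2)^2)=4N-6$.

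The only genuinely delicate point — and the step I would flag as the main obstacle — is checking that the two quotients $\SU(2)$ and $\U(N-2)$ really commute and act on essentially independent data, so that the quotient $\U(N)/(\SU(2)\times\U(N-2))$ is well-defined as a double coset rather than something more subtle: the $\U(N-2)$ sits on the right (column index $k$) while the $\SU(2)$ acts on the left through the spinor index $A$, and one must verify these embeddings into the relevant symmetry groups genuinely commute and that the $\SU(2)$-action descends freely to $\U(N)/\U(N-2)$ away from degenerate (measure-zero) configurations, which is where the preceding Proposition (uniqueness of the $\SU(2)$ relating two closed spinor sets with the same $F$'s) does the real work.
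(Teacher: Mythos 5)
Your proof is correct and follows essentially the same route as the paper: the reference configuration $\Om$, preservation of the closure constraint under the $\U(N)$ action for the $(\Leftarrow)$ direction, the observation that closure makes the two component columns orthogonal with common norm $\sqrt{\lambda}$ and hence completable (Gram--Schmidt) to a unitary for $(\Rightarrow)$, the identification of the stabilizer of $\Om$ as $\U(N-2)$, and the final quotient by the block-diagonal $\SU(2)$. The only slip is the parenthetical dimension count, $N^2-\bigl(3+(N-2)^2\bigr)=4N-7$, not $4N-6$; the missing unit is exactly the $\R_+$ scale $\lambda$ you had already set aside, so the check is consistent with your own ``up to the $\R_+$ scaling'' caveat.
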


\medskip

Before moving on to the next part of the paper, we would like to re-visit this $\U(N)$ structure of the space of polyhedra from the point of view of the $\SU(2)$-invariant observables. The definition of the spinors $|z_k\ra=\sqrt{\lambda}\,|(U\Om)_k\ra$ in terms of the unitary matrix $U\in\U(N)$ implies the diagonalization of the observables $E_{ij}$ and $F_{ij}$ as $N\times N$ matrices:
\be
|z_k\ra=\sqrt{\lambda}\,|(U\Om)_k\ra
\quad\Longrightarrow\qquad
E=\lambda\,\bU\,\mat{cc|c}{1&0&\\0&1&\\ \hline&&0_{N-2}}\,{}^tU\,,
\quad
F=\lambda\,U\,\mat{cc|c}{0&1&\\-1&0&\\ \hline&&0_{N-2}}\,{}^tU\,,
\ee
where the off-diagonal components vanish and ${}^tU=\bU^{-1}$. These definitions of $E$ and $F$ in terms of the matrix $U$ are invariant under transformations $U\arr UG$ with $G\in \SU(2)\times \U(N-2)$.
We can also deduce the existence of the unitary matrix $U$ directly from the $E$'s or $F$'s. Indeed, first considering the Hermitian matrix of the scalar products $E_{ij}=\la z_i|z_j\ra$ for a closed configuration of spinors $|z_i\ra$, the matrix $E$ satisfies a simple polynomial identity:
$$
(E^2)_{ij}=\sum_{k} \la z_i|z_k\ra\la z_k|z_j\ra
=\lambda \sum_{k} \la z_i|z_j\ra
=\lambda E_{ij}\,,
\qquad\textrm{with}\quad
\lambda=\f12\sum_k \la z_k|z_k\ra=\f{\tr E}2\,.
$$
Reversely, this equality is obviously enough to guarantee the existence of $U$ (as already stated in \cite{spinor1}):
\begin{res}
Considering a $N\times N$  Hermitian matrix $E$ satisfying $E^2=\f{\tr E}2\,E$ for some $\lambda\in\R_+^*$, it is diagonalizable with $\lambda$ as its single non-vanishing and doubly-degenerate eigenvalue:
$$
\exists U\in\U(N),\quad
E= \lambda\,\bU\,\mat{c|c}{\id_2&\\
\hline&0_{N-2}}\,{}^tU\,,
\qquad\textrm{with}\quad
2\lambda={\tr E}\,.
$$
\end{res}
%One can replace the equality $E^2=\lambda\,E$ by $E^2=\f{\tr E}{2}\,E$ and get

One can also start from the matrix of  observables $F_{ij}$:
\begin{res}
Considering a non-vanishing $N\times N$  matrix $F$ satisfying the Pl\"ucker relations \eqref{plucker}, it is automatically antisymmetric and of the following form:
$$
\exists U\in\U(N),\quad
\exists \lambda\in\R_+,\quad
F=\lambda\,U\,\mat{cc|c}{0&1&\\-1&0&\\ \hline&&0_{N-2}}\,{}^tU\,.
$$
\end{res}
\begin{proof}
We specialize the Pl\"ucker relations to a doublet of indices $i,j$ and the fixed indices $k,l=1,2$ as before:
$$
F_{ij}F_{12}=F_{i1}F_{j2}-F_{i2}F_{j1}.
$$
This means that $F$ is antisymmetric and furthermore that it is of rank 2 (if it is non-vanishing).

$F$ being a complex antisymmetric matrix, one can diagonalize it as $F=U\,\Sigma\,{}^tU$, with $U\in\U(N)$ and $\Sigma$ of the following type:
$$
\Sigma\,=\,
\mat{cc|cc|c|ccc}{0&\lambda_1 &&&&&&\\ -\lambda_1&0 &&&&&& \\ \hline
&&0&\lambda_2 &&&&\\ &&-\lambda_2&0 &&&&\\ \hline
&&&&\ddots&&& \\ \hline
&&&&&0&& \\
&&&&&&\ddots &\\
&&&&&&&0}
$$
The $\lambda_k$'s are a priori complex. Since $F$ is of rank 2, there is a single non-vanishing block with $\lambda_1\in\C$. One can then absorb its phase in the definition of the unitary matrix $U$ and keep its modulus as $\lambda\in\R_+$.

\end{proof}

\medskip

In the next part \ref{integrals}, we will use this reformulation of (framed) polyhedra in terms of unitary matrices to compute systematically the averages and correlations between the normal vectors defining  polyhedra and characterizing their shape.

%%%%%%%%%%%%%%%%%
\section{Computing Averages through Integrals on $\U(N)$}
\label{integrals}
%%%%%%%%%%%%%%%%%

Now considering the ensemble of (framed) polyhedra provided with the uniform measure or equivalently the $\U(N)$ Haar measure, we study the averages and correlations of polynomial observables and aim at characterizing the shape of a typical polyhedron. In particular, we show how to formulate the averages of polynomial observables in the normal vector $\vV_i$ as integrals over the unitary group $\U(N)$ and how to use the Itzykson-Zuber integral as a generating function for these.

%%%%%
\subsection{Counting Polyhedra: Entropy}
%%%%%

We start by computing the volume of the space of polyhedra with $N$ faces for a fixed total area.  This corresponds to computing the entropy for a simplified model of the black hole horizon in loop quantum gravity \cite{eugenio_bh,counting}. When quantized, this model reproduces the loop gravity's entropy calculation through counting the dimensions of $\SU(2)$ intertwiner spaces (see \cite{lqg_bh_su2} and \cite{lqg_bh_review} for reviews and detail on the description of the quantum states of a black hole horizon as $\SU(2)$ intertwiners).

One defines the density of framed polyhedra with $N$ faces and fixed area $2\lambda$ as the following straightforward integral over spinor variables constrained by a total area condition and the closure conditions:
\beq
\label{rhoN}
\rho_N[\lambda]
&\equiv&
\,8\pi\,\int \prod_i^N \f{d^4z_i}{\pi^2}\,
\delta\left(\sum_k^N \la z_k|z_k\ra-2\lambda\right)\,
\delta^{(3)}\left(\sum_k^N \la z_k|\vsigma|z_k\ra\right)\\
&=&
\lambda^{2N-4}\,8\pi\,\int \prod_i^N \f{d^4z_i}{\pi^2}\,
\delta\left(\sum_k^N \la z_k|z_k\ra-2\right)\,
\delta^{(3)}\left(\sum_k^N \la z_k|\vsigma|z_k\ra\right)\,,\nn
\eeq
where the $8\pi$-factor is an arbitrary choice of normalization.
Integrating over the phases of the spinors, one can perform the change of variables from the $z_k\in\C^2$ to the vectors $\vV_k\in\R^3$. The change of measure is straightforward to perform \cite{un3,un4} and one obtain the density of polyhedra as previously defined in \cite{counting}:
\beq
\rho_N[\lambda]
&=&
8\pi\,\int \prod_i^N \f{d^3\vV_i}{4\pi V_i}\,
\delta\left(\sum_k^N V_k-2\lambda\right)\,
\delta^{(3)}\left(\sum_k^N \vV_k\right)\\
&=&\lambda^{2N-4}\,8\pi\,\int \prod_i^N \f{d^3\vV_i}{4\pi V_i}\,
\delta\left(\sum_k^N V_k-2\right)\,
\delta^{(3)}\left(\sum_k^N \vV_k\right)\,. \nn
\eeq
The most direct way to compute this integral is to Fourier-transform the $\delta$-distribution\footnotemark. One then gets:
\be
\label{rho0}
\rho_N[\lambda]
\,=\,
\f{\lambda^{2N-4}}{(N-1)!(N-2)!}\,,
\ee
where the $8\pi$-factor had been chosen so that $\rho_2[\lambda]=1$ for polyhedra with $N=2$ faces.
\footnotetext{
The only trick is to introduce a regulator $\eps>0$ before performing the integrals over the normal vectors $\vV_k$ through the Fourier transform identity:
$$
\delta\left(\sum_k^N V_k-2\lambda\right)
\,=\,
e^{+2\eps\lambda}\,\int_\R \f{dq}{2\pi}\,
e^{-2iq\lambda}\,\prod_k^N e^{+iqV_k }e^{-\eps V_k}\,,
$$
valid for arbitrary values of $\eps\in\R^+$.
}
One can find the details of this calculation in appendix \ref{rho}. The method is actually useful for defining a partition function over the ensemble of polyhedra and computing the averages of polynomial observables by differentiation as outlined in \cite{counting}. Another method also shown in \cite{counting} is to Fourier-transform the $\delta$-distribution while keeping the spinor variables. One then gets Gaussian integrals which can be easily handled. We will not use this method here.

\medskip

Instead, we would like to highlight the fact that the space of framed polyhedra is isomorphic to the Grassmaniann space $\U(N)/\U(N-2)\times\SU(2)$, which allows for a more geometric interpretation for the volume of $\cP^z_N$.

Indeed, keeping the spinor variables in the definition \eqref{rhoN} of the density $\rho_N[\lambda]$, we write explicitly the total fixed area and closure constraints  in terms of the real and imaginary parts of the spinor variables, $z_k^A=x_k^A+iy_k^A$:
\beq
&&\sum_k |z_k^0|^2=\sum_k |z_k^1|^2=1,
\qquad
\sum_k \bz_k^0 z_k^1 =0, \nn\\
&&\textrm{or equivalently}\quad
\sum_k (x_k^0)^2+(y_k^0)^2=\sum_k (x_k^1)^2+(y_k^1)^2=1,
\qquad
\sum_k x_k^0x_k^1+y_k^0y_k^1=
\sum_k x_k^0y_k^1-y_k^0x_k^1=
0\,.
\eeq
This means that we have two unit vectors of dimension $2N$, $(x_k^0,y_k^0)$ and $(x_k^1,y_k^1)$, both on the $(2N-1)$-dimensional sphere $\cS_{2N-1}$. The second vector $(x_k^1,y_k^1)$ is actually orthogonal to the first vector $(x_k^0,y_k^0)$ but also to the vector $(y_k^0,-x_k^0)$ itself orthogonal to the former vector. This means that this second vector $(x_k^1,y_k^1)$ actually lives on a $(2N-3)$-dimensional sphere $\cS_{2N-3}$ still with unit radius. This leads to a simple geometric interpretation of the density of polyhedra with $N$ faces and fixed total area as the product of the volumes of the spheres $\cS_{2N-1}$ and $\cS_{2N-3}$:
\be
\rho_N[\lambda]
\,=\,
\lambda^{2N-4}\,\f{\pi}4 \f1{(\pi^2)^N}\,\textrm{Vol}(\cS_{2N-1})\,\textrm{Vol}(\cS_{2N-3})
\,=\,
\lambda^{2N-4}\,\f{\pi}4 \f1{(\pi^2)^N}\,\f{2\pi^N}{(N-1)!}\,\f{2\pi^{N-1}}{(N-2)!}
\,=\,
\f{\lambda^{2N-4}}{(N-1)!(N-2)!}\,,
\ee
where the factor $\pi/4$ adjusts the over-all normalization of the integrals.

\medskip

From the point of view of unitary groups, the situation is clear: we are computing the volume of the coset $\U(N)/\U(N-2)$, which can be decomposed as $\U(N)/\U(N-1)\,\times\,\U(N-1)/\U(N-2)$, which is isomorphic to the product of the two spheres $\cS_{2N-1}\,\times\,\cS_{2N-3}$.

Below, we will analyze the average of polynomial observables over the ensemble of polyhedra and we will fully use for this purpose the $\U(N)$ structure. In practice, we will normalize all the results by the overall volume of the space of polyhedra at fixed total area by simply using the normalized Haar measure on the unitary group $\U(N)$.

%%%%%
\subsection{Probing the Average Geometry of a Polyhedron and Fluctuations}
\label{integrals1}
%%%%%

We would like to characterize a typical polyhedron drawn at random from the ensemble with the Haar measure on $\U(N)$. To this purpose, we compute the averages of the normal vectors and  their correlations.
Using the explicit expression of the spinors and vectors in terms of the unitary matrix $U\in\U(N)$ as given earlier by \eqref{z_U},
\be
|z_k\ra\,=\, \sqrt{\lambda}\,\mat{c}{U_{k1}\\U_{k2}},
\qquad
V_k=\la z_k|z_k\ra \,=\,\lambda \,\sum_{\alpha=1,2}\bU_{k\alpha}U_{k\alpha},
\quad
V_k^a=\la z_k|\sigma^a|z_k\ra
\,=\,\lambda \,\sum_{\alpha,\beta}\bU_{k\alpha}U_{k\beta} \sigma^a_{\alpha\beta},
\ee
the averages of product of the norms $V_k$ or vector components $V_k^a$ can all be re-cast as polynomial integrals over $\U(N)$ of the type:
\be
\int_{\U(N)} dU\, U_{i_1j_1}U_{i_2j_2}..U_{i_nj_n}\bU_{k_1l_1}..\bU_{k_nl_n}\,,
\ee
where the number of $U$'s and of its complex conjugate $\bU$'s must match else the integral vanishes.
Here we focus on the explicit computation of these integrals up to the 4rth order, using the basic recoupling theory of $\U(N)$ representations, in order to probe the average geometry and uncertainty of the polyhedra.
Below, we will give the generic behavior of the polynomial integrals in section \ref{correlation} and discuss how such integrals can be generated from the Itzykson-Zuber formula in section \ref{IZ}.

\medskip

Starting with quadratic integrals, we compute the average norm of each normal vector:
\be
\label{Vcl}
\la V_k \ra
\,=\,
\lambda\,\int dU\,(\bU_{k1}U_{k1}+\bU_{k2}U_{k2})
\,=\,
\f{2\lambda}N\,,
\ee
using the orthogonality of the matrix elements of a $\U(N)$ group element in the fundamental $N$-dimensional representation.
This was expected since the total area is $2\lambda$, which is shared isotropically among the $N$ normal vectors. Beside this, the average of each of the vector components $\la V_k^a \ra$ vanishes.

The next step is to compute  the quartic integrals $\la V_k^2\ra$ and $\la V_k^a V_l^b\ra$. This is done using the explicit formula (computed by decomposing the tensor product $U\otimes \bU$ as the matrix elements of the group element $U$ in the trivial and adjoint representations):
\beq
&&\int_{\U(N)} dU\, U_{ij}\bU_{\alpha\beta}U_{\mu\nu}\bU_{kl} \\
&&\,=\,
\f1{N^2}\delta_{i\alpha}\delta_{k\mu}\delta_{j\beta}\delta_{l\nu}
+\f1{N^2-1}
\left(
\delta_{ik}\delta_{\alpha\mu}\delta_{jl}\delta_{\beta\nu}
-\f1N \delta_{ik}\delta_{\alpha\mu}\delta_{j\beta}\delta_{l\nu}
-\f1N\delta_{i\alpha}\delta_{k\mu}\delta_{jl}\delta_{\beta\nu}
+\f1{N^2}\delta_{i\alpha}\delta_{k\mu}\delta_{j\beta}\delta_{l\nu}
\right)\,.\nn
\eeq
Applying this to the average squared-norm and correlations between vector components, straightforward calculations give:
\be
\label{V2cl}
\la V_i^2\ra
\,=\, \lambda^2 \,\int dU\, U_{i\alpha}\bU_{i\alpha}U_{k\nu}\bU_{k\nu}
\,=\,
\f{6\lambda^2}{N(N+1)}\,,
\qquad
\la V_i^a V_i^b\ra
\,=\,
\f{+2\lambda^2\delta_{ab}}{N(N+1)}\,,
\ee
\be
\la V_iV_j\ra_{i\ne j}
= \lambda^2\,\f{2(2N-1)}{(N-1)N(N+1)}\,,
%\qquad\textrm{for}\quad
%i\ne j
\qquad
\la V_i^a V_j^b\ra
\,=\,
\f{-2\lambda^2\delta_{ab}}{N(N^2-1)}\,.
\ee
First, this allows to compute the spread of a face area:
\be
\sqrt{\la V_i^2\ra-\la V_i\ra^2}
\,=\,\f{\lambda\sqrt{2}}{N}\,\sqrt{\f{N-2}{N+1}}
\quad\underset{N\gg1}\sim\,\f{\la V_i\ra}{\sqrt{2}}\,,
\ee
which means that the probability distribution of the area of a face remains fuzzy even as the number of faces grows.

Second, looking at the correlation $\la V_iV_j\ra$ between the areas of two distinct faces, we can check that $\sum_{i,j}\la V_iV_j\ra=4\lambda^2$ as expected from the fixed total area constraint $\sum_i V_i=2\lambda$. Moreover, we check that the area of faces becomes more and more decoupled as the number of faces grows:
\be
\f{\la V_i\ra\la V_j\ra-\la V_iV_j\ra}{\la V_i\ra\la V_j\ra}
\,=\,
\f{N-2}{2(N^2-1)}
\quad
\underset{N\arr\infty}\longrightarrow\,0\,.
\ee

Third, we introduce another  set of observables $\Theta_{ab}$ characterizing the shape of a polyhedron:
\be
\Theta_{ab}=\sum_i V_i^aV_i^b -\f13\delta^{ab}V_iV_i\,,
\ee
which vanishes if the normal vectors are distributed spherically, but will be non-vanishing as soon as we deviate from the isotropic distribution (e.g. if the shape of the polyhedron is more ellipsoidal than spherical). Here, we easily check that:
\be
\la \Theta_{ab} \ra =0\,.
\ee

\medskip

Instead of using $\U(N)$-integrals, one could instead compute brutally these averages and correlations as integrals over the normal vectors together with the closure constraints and fixed area constraint. We give the explicit method in appendix \ref{brutal} and we recover the formulas above. But we have further computed the mean value $\la \Theta_{ab}\Theta_{cd} \ra$  in order to get the standard deviation from the spherical configuration:
\beq
\la \Theta^{ab}\Theta^{cd} \ra
&=&
\lambda^4\,
\f{4\left(4(N^2+N-2)\delta^{ab}\delta^{cd}-6(N-1)(\delta^{ac}\delta^{bd}+\delta^{ad}\delta^{bc})\right)}{3(N-1)N(N+1)(N+2)(N+3)}\\
&\sim& N^{-3}\quad\underset{N\arr\infty}{\longrightarrow}\,0\,.
\eeq
This means that the probability distribution over the ensemble of polyhedra is highly peaked about the spherical configuration.
To get a simpler single indicator, we can compute the average of $\tr \Theta^2$. Classically, $\tr\Theta^2$ has direct expression in terms of the vector scalar products:
\be
\tr\Theta^2
\,=\,
\sum_{i,j}(\vV_i\cdot\vV_j) ^2 -\f13\left(\sum_i V_i^2\right)^2.
\ee
It is always null or positive, $\tr \Theta^2 \ge0$, and measures somehow the shape of the polyhedron. It is maximal when the polyhedron is flat and gets smaller as the polyhedron becomes more and more spherical. For instance, it vanishes for a cube ($N=6$ faces).
We get its average by contracting the indices in the formula above:
\be
\la \tr\Theta^2 \ra
\,=\,
4\lambda^2\f{(N-4)}{N(N+1)(N+2)(N+3)}
\,\sim\,
N^{-3}\quad\underset{N\arr\infty}{\longrightarrow}\,0\,.
\ee

This can be compared to the {\it concentration of measure} on the sphere $\cS_{2N-1}\,\sim\,\U(N)/\U(N-1)$  induced by the Haar measure on $\U(N)$: the uniform measure  concentrates very strongly about any equator as $N$ grows large (see e.g. \cite{hayden} for a description of this phenomenon, focusing on its application to the entanglement of random states). We very probably have a similar concentration of measure on the coset $\U(N)/\U(N-2)$. We will have a closer look at this later in section \ref{explicit}.
%We postpone a rigorous analysis of this feature to future investigation.

\medskip

It is interesting to compare these averages to the one of an ensemble of normal vectors without the closure constraints:
\be
\rho^0_N[\lambda]
\,\equiv\,
%8\pi\,
\int \prod_i^N \f{d^3\vV_i}{4\pi V_i}\,
\delta\left(\sum_k^N V_k-2\lambda\right)\,.
\ee
We use the similar brute-force method by Fourier-transforming the $\delta$-distribution, as done in \cite{counting}, with $\eps\in\R_+$:
\be
\rho^0_N[\lambda]
\,=\,
\int \prod_i^N \f{d^3\vV_i}{4\pi V_i}
\int\f{dq}{2\pi}\,
e^{(iq-\eps)(\sum_k^N V_k-2\lambda)}
\,=\,
e^{2\eps\lambda}\int\f{dq}{2\pi}\,
e^{-2iq\lambda}\,I^0(q)^N,
\ee
$$
%\qquad
\textrm{with}\qquad
I^0(q)
=\int\f{d^3\vV}{4\pi V}e^{-\eps V}e^{iqV}
=\int_0^{+\infty}dV\,Ve^{-\eps V}e^{iqV}
=\f1{(\eps-iq)^2}\,.
$$
This allows us to compute this volume:
\be
\rho^0_N[\lambda]
\,=\,
e^{2\eps\lambda}\int\f{dq}{2\pi}\,
e^{-2iq\lambda}\,\f1{(\eps-iq)^{2N}}
\,=\,
\f{\lambda^{2N-1}}{(2N-1)!}
\quad>\,
\rho_N[\lambda]\,.
\ee
Thinking in terms of spinors, this correspond to the (properly normalized) volume of a $(4N-1)$-dimensional sphere. Using the same techniques as given in appendix \ref{brutal} of differentiating with respect to the momentum conjugated to the vectors $\vV_k$, we have computed the averages and correlations of the vector components, which we note with the subscript $(0)$ to distinguish them from the average over the space of polyhedra:
\be
\la V_i\ra^{(0)}\,=\,\f{2\lambda}{N},\quad
\la V_i^2\ra^{(0)}\,=\,\f{3(2\lambda)^2}{N(2N+1)},\quad
\la V_i^aV_i^b\ra^{(0)}\,=\,\f{\delta^{ab}(2\lambda)^2}{N(2N+1)}\,,
\ee
\be
\la V_iV_j\ra^{(0)}_{i\ne j}
\,=\,
\f{2(2\lambda)^2}{N(2N+1)},\qquad
\la V_i^aV_j^b\ra^{(0)}\,=\,0\,.
\ee
At leading order in $N$, we find the same average $\la V_i\ra$ and spread $\la V_i^2\ra$ for the individual face areas. Here, we can easily go further and compute exactly all the averages $\la V_i^n\ra$ for an individual face area. Indeed:
$$
\la V_i^n\ra^{(0)}
\,=\,
\f1{\rho^0_N[\lambda]}\,
e^{2\eps\lambda}\int\f{dq}{2\pi}\,
e^{-2iq\lambda}\,I^0(q)^{N-1}I^n(q)\,,
$$
$$
\textrm{with}\qquad
I^n(q)
\,=\,
\int_0^{+\infty}{dV}\,V^{n+1}e^{-\eps V}e^{iqV}
=(-\pp_\eps)^n\,I^0(q)
=(n+1)!\,(\eps-iq)^{-(n+2)}\,,
$$
which gives:
\be
\label{Vn0}
\la V_i^n\ra^{(0)}
\,=\,
(2\lambda)^n\,\f{(n+1)!(2N-1)!}{(2N+n-1)!}\,.
\ee
Furthermore, the closure condition is obviously satisfied in average $\la\sum_i V_i^a\ra^{(0)}\,=0$, but it now has a on-trivial spread:
\be
\la|\sum_i \vV_i|^2\ra^{(0)}
\,=\,
\f{3(2\lambda)^2}{2N+1}\,.
\ee
This is due to the vanishing of the correlation between components of two distinct vectors $i$ and $j$. Indeed the main difference between the ensembles satisfying or not the closure constraints is in the correlations between normal vectors. For an individual vector, it does not change the leading order (in $N$) of the averages of the powers of the area $\la V_i^n\ra$ (though the exact full expression does change), as we will check later in section \ref{IZ}.

Going further, we easily check that $\la \Theta^{ab} \ra^{(0)}=0$ and that the ensemble is also peaked on spherically symmetric sets of vectors. We nevertheless expect a deviation for the averages $\la \Theta^{ab}\Theta^{cd} \ra^{(0)}$ but we haven't checked this explicitly.

\medskip

Up to now we have looked explicitly at integrals up to order 4 in the normal vectors (up to order 8 in the spinors). Using the $\U(N)$ framework, it is possible to compute generic formulas for all polynomial integrals over the unitary group and thus compute at least at leading order all polynomial averages over the ensemble of (framed) polyhedra, as we will see in the next section. This is much more powerful than the method of differentiating the partition function.

%In the $\U(N)$ framework, the calculation of $\la \Theta_{ab}\Theta_{cd} \ra$ involves handling the representations of order 4 in the fundamental representation -still easy technically although lengthy. If we want to compute the mean value of higher order observables and describe the behavior higher moments, the difficulty will grow very quickly.
%%
%We will see below that such a calculation can be handled by the Itzykson-Zuber formula which provides a compact way to compute such averages.

%%%%%
\subsection{Polynomial Averages at Leading Order}
\label{correlation}
%%%%%

Using the interplay between the irreducible representations  of $\U(N)$ and of the permutation group $S_n$, \cite{colllins} give a systematic formula for polynomial integrals over $\U(N)$:
\be
\label{collins}
\int dU\,U_{i_1j_1}..U_{i_nj_n}\bU_{k_1l_1}..\bU_{k_nl_n}
\,=\,
\sum_{\sigma,\tau\in S_n}
\delta_{i_1k_{\sigma(1)}}..\delta_{j_1l_{\tau(1)}}
\,
\wg^{(n)}_N(\sigma\tau^{-1}))\,,
\ee
where the sum is over permutations $\sigma$ and $\tau$. The factor is given explicitly as
\be
\wg^{(n)}_N(\sigma)
\,\equiv\,
\f1{n!^2}\sum_{\Lambda\vdash n}
\f{\chi^\Lambda(\id)^2\chi^\Lambda(\sigma)}{s_{\Lambda,N}(1)}\,,
\ee
where the sum is over partitions $\Lambda\vdash n$ of the integer $n$, $\chi^\Lambda$ is the corresponding character of the permutation group $S_n$, and $s_{\Lambda,N}(x_1,..,x_N)$ is the corresponding Schur function, with in particular $s_{\Lambda,N}(1)=s_{\Lambda,N}(1,..,1)$ the dimension of the irreducible representation of $\U(N)$ associated with $\Lambda$.

Furthermore, \cite{collins} goes further and uses combinatorics to provide an asymptotic formula for the symbol $\wg$ at large $N$:
\be
\wg^{(n)}_N(\sigma)
\quad\underset{N\arr\infty}{\sim}\quad
\f1{N^K}\prod_{k=1}^K(-1)^{|c_k|}\cC_{|c_k|}\,,
\ee
in terms of the cycle decomposition of the permutation $\sigma=c_1\dots c_K$. For a generic permutation  $|\sigma|$ is the minimal number of transpositions needed to write $\sigma$. For a cycle, $|c|$ is simply the length of the cycle minus one. $\cC_c$ is the $c$-th Catalan number:
$$
\cC_c\,\equiv\,
\f{1}{c+1}\bin{c}{2c}
\,=\,
\f{(2c)!}{c!(c+1)!}
\,\underset{c\gg 1}{\sim}\,\f1{\sqrt{\pi}}\f{2^{2c}}{c^{3/2}}
\,,
$$
in terms of binomial coefficients.
Large $N$ corresponds geometrically to a very large number of faces and thus at a refinement limit for the polyhedra. This will likely be very useful to understand the large $N$ limit of the distribution of polyhedra and thus study their continuous limit.
This result was used in \cite{collins} to study the large $N$ limit of the Itzykson-Zuber formula, or more precisely of its derivative,
$$
\lim_{N\arr\infty}\,
\left.\f{\pp^n}{\pp\theta^n} \f1{N^2}\,
\log\int_{\U(N)}dU\,e^{\theta N\tr(XUYU^\dagger)}\,\right|_{\theta=0}\,
$$
when the normalized traces $N^{-1}\tr X^k$ and $N^{-1}\tr Y^k$ converge at large $N$ (for all $k$'s). We will investigate below how the Itzykson-Zuber formula can actually be used as the generating function for these polynomial integrals over $\U(N)$.

Applying this formula to the product of vector components $V^a_i$, the indices $j$'s and $l$'s in the integral \eqref{collins} will all be equal to 1 or 2 and contracted with Pauli matrices $\sigma^a_{lj}$. The indices $i$'s and $k$'s correspond to the index of the vectors between 1 and $N$. The permutation $\sigma$ have to match the $i$'s with the $k$'s, thus does not mix between different vectors, while the permutation $\tau$  have to match the $j$'s with the $l$'s and can mix terms corresponding to different vectors. Then one has to compute the traces of product of Pauli matrices corresponding to the cycles of the permutation $\tau$.

Thus in theory it is possible to compute systematically the average of any polynomial observables using this formula. In practice, this can become tedious. Nevertheless, the structure of the formula is rather simple (in terms of the permutations $\sigma$ and $\tau$) and one could study in a straightforward manner the averages of the powers of an interesting observable (e.g. the individual face area or the volume of the polyhedron) if one wanted.

\medskip

An equivalent formula but worded differently can be found  in \cite{brandao}, related to the evaluation of the twirling operator in quantum information and used in the context of the convergence to equilibrium under a random Hamiltonian. Considering the Hilbert space of $\otimes_{k=1}^n \cC^N$, on which the unitary operators $U^{\otimes n}$ act. We consider the representation of the permutation group $S_n$ defined by swapping subsystems:
\be
\forall \sigma\in S_n,\qquad
D^{(N)}(\sigma)(e_{i_1}\otimes..\otimes e_{i_n})
\,=\,
e_{\sigma^{-1}(i_1)}\otimes..\otimes e_{\sigma^{-1}(i_n)}\,,
\ee
where $\{e_i\}_{i=1..N}$ forms a basis of $\cC^N$. Following the notations of \cite{brandao}, we write $\cV_\sigma$ as short for the operator $D^{(N)}(\sigma)$.
It is easy to compute the character of this representation $D^{(N)}$:
$$
\chi^{D^{(N)}}(\sigma)
\,=\,
N^{\ell(\sigma)}\,,
$$
where $\ell(\sigma)$ is the number of cycles in the cycle decomposition of the permutation $\sigma$.

Then defining the twirling operator $T_n(\cdot)=\int dU\,U^{\otimes n}(\cdot)U^{\otimes n}{}^\dagger$, we have for any two operators $A,B$ acting on $(\cC^N)^{\otimes n}$:
\be
\tr\,AT_n(B)
\,=\,
\sum_{\sigma,\tau}
a_\sigma \,b_\tau\, M^{-1}_{\sigma\tau}
\qquad\textrm{with the matrix}\quad
M_{\sigma\tau}=\tr \cV_{\sigma^{-1}}\cV_\tau\,,
\ee
and the vectors $a_\sigma\,\equiv\,\tr\,A\cV\cV_{\sigma^{-1}}$ and the same for $B$. The proof can be found in \cite{brandao}. The matrix $M$ has a simple form:
$$
M_{\sigma\tau}
=\tr \cV_{\sigma^{-1}\tau}
=\chi^{D^{(N)}}(\sigma^{-1}\tau)
=N^{\ell(\sigma^{-1}\tau)}\,,
$$
and the whole issue is to invert this matrix, which leads to the same result as presented above when applied to operators $A$ and $B$ taken in the standard basis. More details on the structure and possible computation of $M^{-1}$ can be found in \cite{brandao} for the interested reader.

%%%%%
\subsection{Itzykson-Zuber Formula as Generating Function}
\label{IZ}
%%%%%

The Itzykson-Zuber formula allows to compute the integral over $\U(N)$ of the exponential of matrix elements of $U$ and $\bU$. Based on the localization of integrals, it first appeared in relation to matrix models and two-dimensional quantum gravity \cite{iz1} and be computed explicitly using the Harish-Chandra formula (e.g. \cite{collins}).

It goes as follows. Let us consider two N$\times$N matrices $X$ and $Y$ and let $(x_i)$ and $(y_i)$ be their respective eigenvalues. We call $\Delta(X)=\prod_{i<j} (x_j-x_i)$ and $\Delta(Y)=\prod_{i<j} (y_j-y_i)$ their Vandermonde determinant. Then the Itzykson-Zuber formula reads:
\be
\int_{\U(N)}dU\,
e^{i\theta\,\tr \left(YU^\dagger X U\right)}
\,=\,
\f{\det\left(e^{i\theta x_jy_k}\right)_{1\le j,k\le N}}{\Delta(X)\Delta(Y)}\,(i\theta)^{\f{-N(N-1)}2}\,.
\ee

Choosing appropriate matrices $X$ and $Y$, this Itzykson-Zuber formula can be seen as the generating function for all the correlations between the normal vectors over our polyhedron ensemble.
In our case, let us give an example with the observable $V_i$ and its powers. We have:
$$
V_i
=\la z_i|z_i\ra
=\lambda\,(U_{i1}\bU_{i1}+U_{i2}\bU_{i2})
\,=\, \tr\,(YU^\dagger XU),
\qquad\textrm{with}\quad
Y_{jk}=(\delta_{j1}\delta_{k1}+\delta_{j2}\delta_{k2})
\quad\textrm{and}\quad
X^{(i)}_{jk}=\delta_{ji}\delta_{ki}\,.
$$
The matrix $Y$ is fixed and implements the reduction from $\U(N)$ to our space of polyhedron  $\U(N)/\U(N-2)$. The matrix $X$ selects the considered observables.
Then the mean value $\la \exp(i\theta V_i)\ra$ is a Itzykson-Zuber integral:
\beq
\la e^{i\theta V_i}\ra
&=&
c\int_{\U(N)}dU\,e^{i\theta\,\tr \left(YU^\dagger X U\right)} \nn\\
&=&
1+\sum_{n=1}^\infty \f{(N-1)!}{(n+N-1)!}{(n+1)}\,(i\theta\,\lambda)^n\,,
\eeq
where $c$ is a normalization constant such that $\la 1\ra =1$ for $\theta=0$. The trick to derive this formula is to regularize the Itzykson-Zuber formula by shifting slightly all the eigenvalues of $X$ and $Y$ to ensure that they are different and then to send these regulators to 0 at the end. Then this result gives us directly all the mean values $\la( V_i)^n\ra$, without having to suitably differentiate the density of state $\rho_N[\lambda]$ as in section \ref{integrals1} or compute the polynomial $\U(N)$ integrals as in section \ref{correlation}:
\be
\la V^n\ra
\,=\,
\lambda^n\,
\f{(n+1)!(N-1)!}{(N+n-1)!}\,,
\ee
which matches our expressions already derived for $\la V\ra$ and $\la V^2\ra$. We can compare them to the free model without closure constraints as introduced earlier in section \ref{integrals1}, which had the following averages \eqref{Vn0}:
$$
\la V^n\ra^{(0)}
\,=\,
(2\lambda)^n\,\f{(n+1)!(2N-1)!}{(2N+n-1)!}\,.
$$
First, we notice that these are different (though similar), showing that the two models are clearly distinct and have a different probability distribution for the individual face areas. Second, as claimed earlier, the two expressions nevertheless match at large $N$ for a fixed power $n$:
$$
\la V^n\ra
\underset{N\gg1}{\sim}
\lambda^n\,\f{(n+1)!}{N^n}
\sim
\la V^n\ra^{(0)}\,.
$$

We can go further and get the formula for the fixed matrix $Y$ but for arbitrary matrix $X$. We perturb around the actual eigenvalues of $Y$ as $y_1=1+\eps_1$, $y_2=1+\eps_2$ and $y_{k\ge 3}=\eps_k$. Both numerator and denominator of the Itzykson-Zuber vanish as all the $\eps_i$ are set to 0. We can nevertheless suitably differentiate both numerator and denominator (using L'H\^opital rule) until we reach non-vanishing values, here $\pp_{\eps_N}^{(N-2)}\pp_{\eps_{N-1}}^{(N-3)}..\pp_{\eps_3}\pp_{\eps_2}$. This leads to for $N\ge 4$:
\be
\f{\det\left(e^{i\theta x_jy_k}\right)_{1\le j,k\le N}}{\Delta(Y)}
\quad\underset{\eps_i\arr 0}\longrightarrow\quad
i^{\f{N(N+1)}2}\,\theta^{3(N-3)+1}\,
\f{\sum_\sigma \eps[\sigma]\,x_{\sigma(1)}^{N-2}x_{\sigma(2)}^{N-3}..x_{\sigma(N-2)}\,e^{i\theta x_{\sigma(N-1)}}\,x_{\sigma(N)}e^{i\theta x_{\sigma(N)}}}
{(N-1)!\prod_{k=1}^{N-3}k!}\,.
\ee
The numerator is a modified Vandermonde determinant (but vanishes when $\theta=0$) while the denominator comes from differentiating the original Vandermonde determinant $\Delta(Y)$ (it is also the determinant of the $(N-2)\times(N-2)$ matrix whose matrix elements are given by $m_{ij}=\prod_{k=1}^i (k+j)$).
This provides a direct formula for the observables $\sum_i x_i V_i$ for a diagonal matrix $X$:
$$
\theta\,\sum_i x_i V_i = (\theta\lambda)\,\tr(YU^\dagger X U)
\qquad\textrm{for}\quad
X=(x_1,..x_N)\,.
$$
When the matrix  $X$ is arbitrary and not diagonal, its off-diagonal components allows us to probe the correlations between the various spinors $z_i$:
$$
(\theta\lambda)\,\tr(YU^\dagger X U)
\,=\,
\theta\,\sum_{ij}X_{ij}\la z_i|z_j\ra\,.
$$
Then the Itzykson-Zuber integral can be understood as the generating function for the averages and correlations of the spinor scalar products. From these and taking into account that the vector scalar product is related to the spinor scalar product, $|\la z_i|z_j\ra|^2=V_iV_j+\vV_i\cdot\vV_j$, we can extract in principle all the averages and correlations of the $\SU(2)$-invariant polynomials in the vector components $\vV_i^a$.
It would be interesting to apply these techniques to computing the averages of the powers of the (squared) volume observable, in order to get a better idea of the typical shape of polyhedra, but also because the exact spectrum of the (squared) volume operator at the quantum level is still an open issue.

\medskip

Thus we have seen how the Itzykson-Zuber integral over $\U(N)$ expressed in terms of Vandermonde determinants can be considered as the generating function for the averages of all polynomial observables in the polyhedra's normal vectors. These averages are extracted by suitable differentiating of this Itzykson-Zuber formula. An interesting point is whether the Itzykson-Zuber integrant $e^{i\theta \tr YU^\dagger X U}$ for the fixed considered $Y$ but arbitrary $X$ can have a physical or geometrical relevance, for instance when investigating some (random) dynamics on the space of (framed) polyhedra. We leave this for future investigation.

%%%%%
\subsection{Explicit $\U(N)$ Parametrization and Haar Measure}
\label{explicit}
%%%%%

We now turn to another method to compute these integrals over $\U(N)$ using an explicit parametrization of the unitary matrices and the corresponding recursive formula for the Haar measure on $\U(N)$ \cite{spengler}.

The goal is to draw a unitary matrix at random with respect to the Haar measure, or more precisely to draw at random its two first columns, that is two ortogonal complex $N$-vectors of unit norm. The details of the parametrization and construction for the whole unitary matrix can be found in \cite{spengler}. Here, we will only detail the parametrization of the two first columns and thus of the spinors defining the polyhedra with $N$ faces.

The parametrization is best defined recursively. We start with the case $N=2$. Two arbitrary orthogonal complex $2$-vectors of unit norm can be written as:
\be
v^{(2)}\,=\,
\mat{c}{e^{i\theta_1}\cos\alpha_2 \\ e^{i\theta_2}\sin\alpha_2},
\qquad
w^{(2)}\,=\,
e^{i\phi_2}\,\mat{c}{-e^{i\theta_1}\sin\alpha_2 \\ e^{i\theta_2}\cos\alpha_2}\,,
\ee
where the phases $\theta_1$, $\theta_2$ and $\phi_2$ live in $[0,2\pi]$ while the rotation angle $\alpha_2$'s range is $[0,\f\pi2]$. The normalized Haar measure then reads:
\be
d\mu_2
\,=\,
\f1{\cN_2}\,
\sin(\alpha_2)\cos(\alpha_2)\,d\alpha_2d\theta_1d\theta_2d\phi_2,
\qquad\textrm{with}\quad
\cN_2
\,=\,
\f12\,(2\pi)^3\,.
\ee
The components of the two spinors are read directly from these complex vectors:
$$
z_i=\sqrt{\lambda}\,\mat{c}{v^{(2)}_i \\w^{(2)}_i},
\qquad
z_1=e^{i\theta_1}\,\sqrt{\lambda}\,\mat{c}{\cos\alpha_2 \\-e^{i\phi_2}\,\sin\alpha_2}\,,
z_2=e^{i\theta_2}\,\sqrt{\lambda}\,\mat{c}{\sin\alpha_2 \\ e^{i\phi_2}\,\cos\alpha_2}\,.
$$
This provides a parametrization of a unitary matrix in $\U(2)$ as expected.

Then we can define the two complex vectors $v^{(N)}$ and $w^{(N)}$ recursively from $v^{(N-1)}$ and $w^{(N-1)}$ as:
\be
v^{(N)}
\,=\,
\mat{c}{\cos\alpha_N\,v^{(N-1)} \\ e^{i\theta_N}\sin\alpha_N},
\qquad
w^{(N)}
\,=\,
\mat{c}{\cos\beta_N\,w^{(N-1)} \\ 0}
+e^{i\phi_N}\,\mat{c}{-\sin\alpha_N\sin\beta_N\, v^{(N-1)} \\ e^{i\theta_N}\cos\alpha_N\sin\beta_N}\,,
\ee
where we have added four new parameters, $\theta_N,\phi_N\in[0,2\pi]$ and $\alpha_N,\beta_N\in[0,\f\pi 2]$. The normalized Haar measure is now:
\be
d\mu_N
\,=\,
\f1{\cN_N}\,
d\theta_1\,\,\prod_{k=2}^N \sin\alpha_k \cos^{2k-3}\alpha_k \,d\alpha_kd\theta_kd\phi_k\,\,
\prod_{k=3}^N \sin\beta_k \cos^{2k-5}\beta_k \,d\beta_k\,,
\ee
$$
\qquad\textrm{with}\quad
\cN_n=\f{(2\pi)^{2N-1})}{\prod_{k\ge2} 2(k-1)\,\prod_{k\ge3} 2(k-2)}\,.
$$
%
%Finally, have to add a random relative phase $\exp{i\phi}$ between the two vectors $v^{(N)}$ and $w^{(N)}$ to be truly arbitary orthognal $N$-vectors.

We can read the components of the $N$ spinors directly from these two complex vectors, up to the global scale factor $\sqrt{\lambda}$. In total, we have parametrized our spinors using $(4N-4)$ angles $\alpha_k,\beta_k,\theta_k,\phi_k$
%plus the phase $\phi$
plus $\lambda$. These are $(4N-3)$ parameters, exactly the dimension of the space of $N$ spinors satisfying the closure constraints.
If we want to further gauge fix the $\SU(2)$ invariance, we can fix the direction of the last vector $\vV_N$. In terms of the components of the last spinor, $z_N=e^{i\theta_N}\,(\sin\alpha_N\,,\,e^{i\phi_N}\,\cos\alpha_N\sin\beta_N)$,
this amounts to fixing $\phi_N=\alpha_N=\beta_N=0$. Fixing these three parameters, this provides an explicit parametrization of the $(4N-6)$-dimensional space $\cP^z_N$ of framed polyhedra up to 3d rotations.

\medskip

If we consider the first vector $v^{(N)}$, we can give its full expression:
\be
v^{(N)}
\,=\,
\mat{c}{e^{i\theta_1}\cos\alpha_2\cos\alpha_3..\cos\alpha_N\\
e^{i\theta_2}\sin\alpha_2\cos\alpha_3..\cos\alpha_N\\
e^{i\theta_3}\sin\alpha_3..\cos\alpha_N\\
\vdots\\
e^{i\theta_N}\cos\alpha_N}
\qquad\textrm{with}\quad
d\mu(v^{(N)})\,\propto\,
\prod_i^Nd\theta_i\prod_{k=2}^N\sin\alpha_k\cos^{2k-3}\alpha_k\,d\alpha_k\,.
\ee
This gives actually a random vector on the complex unit sphere in $\C^N$, distributed uniformly with respect to the Haar measure on $\U(N)$.
It is well known that there is a phenomenon of concentration of measure on the complex sphere as $N$ grows, e.g. \cite{hayden}. More precisely, the integral over the complex sphere is almost equal to the simpler integral over the equator of the sphere (for $\alpha_N=0$). This is due to the specific shape of the Haar measure in this parametrization, which gets concentrated to the equator as $N$ grows large. In the context of quantum information (and quantum computing), this concentration of measure is often used to argue that arbitrary states are maximally entangled between subsystems as the dimensions of the Hilbert spaces grows large, e.g. \cite{hayden,spengler2}.

Here we are drawing a second complex vector  $w^{(N)}$, which is orthogonal to the first one. It would be interesting to investigate whether there is a similar phenomenon of concentration of measure and what would be its geometrical interpretation on the space of (framed) polyhedra. We postpone such analysis to future investigation. Nevertheless, this explicit parametrization does provide a very useful tool in order to compute the average of any polynomial observable over the space of polyhedra as an explicit trigonometric integral.

%random $N$-vectors ar orthognal?

%%%%%%%%%%%%%%%%%
\section{Deforming Quantum Polyhedra}
%%%%%%%%%%%%%%%%%

This section is dedicated to the study of the quantum case: we quantize the space of framed polyhedra into the Hilbert space of $\SU(2)$ intertwiners interpreted as quantum polyhedra, following the previous work done in \cite{spinor1,un2,un3,un4}. We will see that the Hilbert space of quantum polyhedra has the same structure as the classical set of framed polyhedra. We have indeed a cyclic action of the $\U(N)$ transformations on quantum polyhedra with fixed total boundary area and we can construct coherent polyhedron state labeled by the classical framed polyhedra (up to 3d rotations). Finally, we will give two ways to write the trace of geometrical operators: either using the $\U(N)$ character formula, which is interpreted as the quantum counterpart of the Itzykson-Zuber integral formula or using the coherent states and having an integral over ``fuzzy" polyhedra.

%%%%%
\subsection{Quantizing Polyhedra into Intertwiners}
%%%%%

We canonically quantize the space of spinors $\C^{2N}$ by promoting the components of the spinors and their complex conjugate to harmonic oscillators:
\be
\{z^A_i,\bz^B_j\}
\,=\,
-i\delta^{AB}\delta_{ij}
\quad\longrightarrow\quad
[a^A_i,a^B_j{}^\dagger]
\,=\,
\delta^{AB}\delta_{ij}\,,
\ee
where we have taken the convention $\hbar=1$.
As shown and used in \cite{spinor1,un2,un3,un4} (see also \cite{spinor2,spinor3,hamiltonian3d}), the closure constraints $\cC^a$ generating the $\SU(2)$ action on spinors, the $\U(N)$ generators $E_{ij}$ and the $\SU(2)$-invariant observables $F_{ij}$ are all quantized without ambiguity and their algebra at the quantum level is without any anomaly.
We consistently choose the normal ordering, keeping the annihilation operators $a^{0,1}$ to the right and the creation operators $a^{0,1}{}^\dagger$ to the left.
For details, the interested reader can refer to those references. We will nevertheless give here a quick summary of the main structures, relevant to our main point, that is the $\U(N)$ action on $\SU(2)$ intertwiners.

\medskip

For the sake of completeness, we give the expressions of the basic operators, which are all quadratic in the harmonic oscillators. When there can be no confusion, we will not distinguish the classical quantity from the quantum operator, else we will put a hat $\hat{ }$ on the quantum operator.
For the $\SU(2)$ generators, we have $\cC^a=\sum_iV^a_i$ with:
\be
V^a_i=\sum_{A,B}\sigma_a^{AB}a^{A\dagger}_ia^B_i,
\qquad
V^z_i=(a^{0\dagger}_ia^0_i-a^{1\dagger}_ia^1_i),\quad
V^+_i=a^{0\dagger}_ia^1_i,\quad
V^-_i=a^{1\dagger}_ia^0_i\,.
\ee
These form on each face $i$ the Schwinger representation of the $\su(2)$ algebra in terms of two harmonic oscillators. We also introduce the operator giving the total energy of the oscillators living on the face $i$ as the quantization of the norm of the normal vector $V_i$:
\be
V_i=\sum_{A}a^{A\dagger}_ia^A_i,
\qquad
[V_i\,,\,V_i^a]=0\,.
\ee
As well-known, this $\SU(2)$ representation is reducible and irreducible components are obtained by diagonalizing the Casimir operator $V_i$, whose eigenvalues are twice the spin living on that face, $2j_i\in\N$. This is interpreted as usual as the quantization of the individual face areas.

We similarly quantize the spinor scalar products:
\be
E_{ij}=a^{0\dagger}_ia^0_j+a^{1\dagger}_ia^1_j,
\qquad
F_{ij}=a^0_ia^1_j-a^1_ia^0_j,\qquad
F^\dagger_{ij}=a^{0\dagger}_ia^{1\dagger}_j-a^{1\dagger}_ia^{0\dagger}_j\,.
\ee
It is straightforward to compute the commutators of these operators and check that they give the same results as their Poisson brackets. In particular, the $E$'s and $F$'s commute with the closure constraint operators $\cC^a$ and thus are $\SU(2)$-invariant. Moreover the $E_{ij}$ form a closed $\u(N)$ algebra. Using the definition in terms of harmonic oscillators, the Casimir of this $\u(N)$ algebra is easily related to the total area \cite{un1}:
\be
\sum_{i,j}E_{ij}^\dagger E_{ij}
=E(E+N-1),\qquad
E=\sum_i^NE_{ii}=\sum_i V_i\,.
\ee
%This allows to identify the irreducible representations of $\U(N)$.

\medskip

Looking at the Hilbert spaces, we start with $2N$ copies of the Hilbert space of a single harmonic oscillator, $(\cH_{HO}\otimes \cH_{HO})^{\otimes N}=L^2(\C^{2N})$. Each couple $(\cH_{HO}\otimes \cH_{HO})$ can be decomposed in irreducible representations of $\SU(2)$ with arbitrary spin $j\in\N/2$ (given by half the total number of quanta of the oscillators).
Then we impose a $\SU(2)$-invariance by requiring that the closure constraint operators $\cC^a=\sum_i V^a_i$ vanish on the states. This is exactly the Hilbert space of $\SU(2)$ intertwiners between $N$ irreducible representations:
\be
\cH^{(N)}
\,=\,
\textrm{Inv}_{\SU(2)}\,\Big{[}(\cH_{HO}\otimes \cH_{HO})^{\otimes N}\Big{]}
\,=\,
\textrm{Inv}_{\SU(2)}\,\Big{[}\bigotimes_i^N \bigoplus_{j_i\in\N/2} \cV^{j_i}\Big{]}
\,=\,
\bigoplus_{\{j_i\}_{i=1..N}}\,\textrm{Inv}_{\SU(2)}\,\Big{[}\bigotimes_i^N\cV^{j_i}\Big{]}\,,
\ee
where we write $\cV^j$ for the irreducible $\SU(2)$-representation of spin $j$.
On this Hilbert space of intertwiners, we have a $\U(N)$ action generated by the $E_{ij}$. Since the corresponding $\u(N)$-Casimir $\sum_{i,j}E_{ij}^\dagger E_{ij}$ is determined in terms of the total area operator $E$ whose value is simply the sum of twice the spins $\sum_i^N (2j_i)$, we can simply decompose the space $\cH^{(N)}$ in irreducible components by fixing the value of the total area:
\be
\cH^{(N)}
\,=\,
\bigoplus_{J\in\N} \cR_N^J,\qquad
\cR_N^J
\,=\,
\bigoplus_{\sum^Nj_i=J}
\textrm{Inv}_{\SU(2)}\,\Big{[}\bigotimes_i^N\cV^{j_i}\Big{]}\,,
\ee
where each Hilbert space $\cR^J$ carries an irreducible representation of $\U(N)$, as shown in \cite{spinor1,un1,un2}. The corresponding Young tableaux is given by two horizon lines of equal length $J$. The corresponding highest weight vector $|\psi^J\ra$ corresponds to a bivalent intertwiner, which is the quantum equivalent of the completely squeezed polyhedron in the classical case:
\be
E_{11}\,|\psi^J\ra
\,=\,
J\,|\psi^J\ra,
\quad
E_{22}\,|\psi^J\ra
\,=\,
J\,|\psi^J\ra,
\quad
\forall k\ge3,\,\,E_{kk}\,|\psi^J\ra
\,=\,
0,
\quad
E_{i>j}\,|\psi^J\ra=0\,,
\ee
where the $E_{ii}=V_i$ are the generators of the Cartan subalgebra. In particular, we notice that this highest weight vector is invariant under $\U(N-2)$, which corresponds to the expectation that the classical space of framed polyhedra is isomorphic to the Grassmanniann space $\U(N)/(\U(N-2)\times\SU(2))$.

The dimension of each of these irreducible $\U(N)$-representations can be computed using the hook formula. This gives:
\be
d_N[J]=\dim \cR^J_N =
\,\f1{J+1}\,\mat{c}{N+J-1\\J}\,\mat{c}{N+J-2\\J}\,.
\ee
This is the total number of $\SU(2)$-intertwiners  for a fixed number of faces $N$ and fixed total area $2J=\sum_i 2j_i$. It is the quantum counterpart of the density of states $\rho_N[\lambda]$, which gives the volume of the space of framed polyhedra with $N$ faces and total area $2\lambda$.
Indeed, looking at the large area limit while keeping $N$ fixed, gives:
\be
d_N[J]
\underset{J\arr\infty}{\sim}
\f{J^{2N-4}}{(N-1)!(N-2)!}+\f{NJ^{2N-5}}{(N-1)!(N-3)!}+\dots\,,
\ee
which fits at leading order in $J$ with $\rho_N[\lambda]$, as given by \eqref{rho0}, for $\lambda=J$. Notice that all the terms have the same order in $N$. Therefore, this limit can be considered carefully. To be more rigorous, one should put the $\hbar$-factors back in the quantum expression, then this is the limit where the Planck area unit is sent to 0, while keeping the total area fixed. Then this amounts to sending the sum of the spin to $\infty$, thus giving the wanted result.

\medskip

To summarize the structures, the vector operators $V_i$ acts on each subspace $\cV^{j_i}$ living on each face and generate the $\SU(2)$-action on those subspaces. The $\SU(2)$-invariant operators $E_{ij}$ act on each subspace $\cR_N^J$, defined as the space of $\SU(2)$ intertwiners for fixed sum of the spins $J=\sum_i j_i$, and they generate a $\U(N)$-action on each of these subspaces. Finally the $F_{ij}$ and $F^\dagger_{ij}$ operators respectively act as annihilation and creation operators on the full space of intertwiners $\cH^{(N)}$ allowing to respectively decrease and increase the total area $J$.

These $\SU(2)$-intertwiners are the quantum counterpart of the classical (framed) polyhedra. They are also the basic building blocks of the spin network states of quantum (space) geometry in loop quantum gravity \cite{lqg_review}. That identification of intertwiners as quantum polyhedra is the key to the geometrical  interpretation of spin network as discrete  geometries constructed as (quantum) polyhedra glued together. this identification will be made even clearer below when dealing with coherent intertwiner states peaked on classical framed polyhedra.

%%%%%
\subsection{Beyond Intertwiners: non-Closed Quantum Polyhedra}
%%%%%

Considering the tensor product of $N$ representations of $\SU(2)$, one for to each face of the polyhedron, we have imposed up to now the closure constraint and thus required invariance of our tensor product states under $\SU(2)$. We can relax this condition and characterize states that recouple to a fixed overall spin $\cJ$ different from 0. This corresponds to the classical case where the closure constraints are broken and the sum of the normal vectors do not vanish but the closure vector has a fixed norm.

We are now working on another subspace of $(\cH_{HO}\otimes \cH_{HO})^{\otimes N}=L^2(\C^{2N})$, such that the value of the $\SU(2)$-Casimir given as the norm squared of the closure constraint operators $\cC^2$ is fixed to $\cJ(\cJ+1)$:
\be
\cH^{(N)}_{\cJ}
\,=\,
\textrm{Cov}^{\cJ}_{\SU(2)}\,\Big{[}(\cH_{HO}\otimes \cH_{HO})^{\otimes N}\Big{]}
\,=\,
\bigoplus_{\{j_i\}_{i=1..N}}\,\textrm{Cov}^{\cJ}_{\SU(2)}\,\Big{[}\bigotimes_i^N\cV^{j_i}\Big{]}
\,=\,
\bigoplus_{\{j_i\}_{i=1..N}}\,\textrm{Inv}_{\SU(2)}\,\Big{[}\cV^{\cJ}\otimes\bigotimes_i^N\cV^{j_i}\Big{]}\,.
\ee
This is actually equivalent to having intertwiners, i.e $\SU(2)$-invariant states, between the $N$ original irreducible representations $\cV^{j_i}$ and an extra one $\cV^{\cJ}$.

We still have the $\U(N)$-action on this Hilbert space $\cH^{(N)}_{\cJ}$ and we can decompose it into $\U(N)$ irreducible representations:
\be
\cH^{(N)}_{\cJ}
\,=\,
\sum_{J}\bigoplus_{\sum^Nj_i=J}
\textrm{Cov}^{\cJ}_{\SU(2)}\,
\Big{[}\bigotimes_i^N\cV^{j_i}\Big{]}
\,=\,
\sum_{J}\bigoplus_{\sum^Nj_i=J}
\textrm{Inv}_{\SU(2)}\,\Big{[}\cV^{\cJ}\otimes
\bigotimes_i^N\cV^{j_i}\Big{]}\,,
\ee
where the total area $J$ is of the same parity as the overall spin $\cJ$ (i.e half-integer or integer depending on $\cJ$) and necessarily larger or equal to $\cJ$.

Each of the subspaces at fixed $J$ carries an irreducible representation of $\U(N)$. Its highest weight vector is defined by the (unique) trivalent intertwiner between $\SU(2)$-representations of spins $\f{J+\cJ}2$, $\f{J-\cJ}2$ and $\cJ$, i.e the values of the Cartan subalgebra generators on it are \cite{un1}:
$$
E_{11}\,|\psi^J_\cJ\ra
\,=\,
(J+\cJ)\,|\psi^J_\cJ\ra,
\quad
E_{22}\,|\psi^J_\cJ\ra
\,=\,
(J-\cJ)\,|\psi^J_\cJ\ra,
\quad
\forall k\ge3,\,\,E_{kk}\,|\psi^J_\cJ\ra
\,=\,
0\,.
$$
Thus the corresponding Young tableaux contains two horizontal lines of respective lengths $(J+\cJ)$ and $(J-\cJ)$ and the dimensions of the representations are \cite{un1}:
\be
d_N[J,\cJ]
\,=\,
\dim  \cR^{J,\cJ}_N
\,=\,
\dim \sum_{J}\bigoplus_{\sum^Nj_i=J}
\textrm{Cov}^{\cJ}_{\SU(2)}\,
\Big{[}\bigotimes_i^N\cV^{j_i}\Big{]}
\,=\,
\f{2\cJ+1}{J+\cJ+1}\,\mat{c}{N+J+\cJ-1\\J+\cJ}\,\mat{c}{N+J-\cJ-2\\J-\cJ}\,.
\ee
It is fairly easy to check that summing over all possible values of $\cJ\le J$, we recover the full Hilbert space of intertwiners for $(N+1)$ faces and fixed total area $J$:
\be
d_{N+1}[J]
\,=\,
\sum_{\cJ\le J}
d_N[J,\cJ]\,.
\ee
This could be proved directly either by recombining the binomial coefficients or by recursion.

\medskip

Finally, it would be interesting to investigate whether there is a similar procedure to ``close" non-invariant configuration as in the classical case, where we could apply a $\SL(2,\C)$ transformations on an arbitrary non-closed set of spinors in order to map it into a closed set of spinors defining an actual framed polyhedron. We postpone to future investigation the thorough study of the existence on a $\SL(2,\C)$-action on the space of intertwiners and of its properties.

%%%%%
\subsection{Probing the shape of Intertwiners}
%%%%%

Similarly to the classical case, we now would like to compute the traces of geometrical operators on the Hilbert space of $\SU(2)$-intertwiners at fixed number $N$ of faces and fixed total area $J=\sum_i j_i$.

We can already deduce some averages from the fixed area condition $J=\sum_i j_i$ and the formula for the dimensions of the intertwiner spaces $d_N[J]$. We obviously have:
\be
\la 2j_i\ra=\la V_i \ra
\,\equiv\,
\f1{d_N[J]}\,\tr_{\cH^{(N)}} V_i
=\f{2J}N\,,
\ee
which is also equal to the classical average \eqref{Vcl}.
We can also single out explicitly one face/leg of the intertwiner. Then using the dimension of the space of tensor product states for a fixed external spin (or intertwiners with one fixed spin) given in the previous section, we compute:
\be
\la V_i^2 \ra = \la 4j_i(j_i+1) \ra
=\f{1}{d_N[J]}\sum_{j\ge \f J2} 4j(j+1)d_{N-1}[J,j]
=\f{6J(J+N)}{N(N+1)}\
=\f{6J^2}{N(N+1)}+\f{6J}{N+1}
,.
\ee
We see that the first term in $\la V_i^2 \ra$ fits exactly the classical average \eqref{V2cl}. The second term is the quantum correction, and is sub-leading in the classical limit defined by taking large $J$ at fixed $N$.

%We can also compute:
%\be
%\la V_i^aV_i^b\ra=\f{2J(J+N)}{N(N+1)}\,\delta^{ab}\,,
%\ee
%and we still have $\la\Theta^{ab}\ra=0$ in the quantum case.

Playing around with the binomial coefficients, one can show the somewhat surprising formula giving the traces of  arbitrary powers of the norm:
\beq
\la 2j_i(2j_i+1)..(2j_i+n) \ra
&=&
\f{1}{d_N[J]}\sum_{j\ge \f J2} 2j(2j+1)..(2j+n)d_{N-1}[J,j]\nn\\
&=&
J((m+2)J+2N+m-2)\,(m+1)!\,\f{(N+J+m-2)!}{(N+J-1)!}\f{(N-1)!}{(N+m)!} \\
&=&
J((m+2)J+2N+m-2)\,(m+1)!\,
\f{\mat{c}{N+J+m-2\\m-1}}
{\mat{c}{N+m\\m-1}}\,,\nn
\eeq
from which we can recover the traces $\la V_i\ra$ and $\la V_i^2\ra$.

\medskip

We can square the fixed area condition and deduce the correlation between spins $i\ne k$:
\beq
\left\la\left(\sum_i V_i\right)^2\right\ra
&=&N\,\la V_i^2\ra+ N(N-1)\,\la V_iV_k\ra_{i\ne k}=(2J)^2 \nn\\
&&\qquad\Rightarrow\quad
\la V_iV_k\ra_{i\ne k}=\la 4j_ij_k\ra
=J^2\,\f{2(2N-1)}{(N-1)N(N+1)}-\f{6J}{(N-1)(N+1)}\,.
\eeq
Similarly, using the closure constraint operator, or in other words the $\SU(2)$-invariance, we can compute:
\be
\left\la\left(\sum_i \vV_i\right)^2\right\ra
=N\,\la V_i^2\ra+ N(N-1)\,\la \vV_i\cdot\vV_k\ra_{i\ne k}=0
\qquad\Rightarrow\quad
\la \vV_i\cdot\vV_k\ra_{i\ne k}=
\f{-6J(J+N)}{(N-1)N(N+1)}\,.
\ee
If we want to go further and compute traces of operators involving the values of the spins on three or more legs and thus probing the fine structure of the intertwiners, we would have to compute the dimensions of the intertwiner subspaces with fixed spins. Instead of doing this by hand, we can do this consistently using the full $\U(N)$-character formula, which computes the trace of $\U(N)$ transformations instead of simply the dimension which gives the trace of the identity. This the method outlined in \cite{un1} and we show here that it should be considered as a generalization to the quantum case of the Itzykson-Zuber formula used as generating function for averages over the ensemble of classical polyhedra.

\medskip

More precisely, the character of the $\U(N)$ representation, of highest weight $[l_1,..l_N]$, computes the trace of a diagonalized unitary transformation $U=(e^{i\theta_1},..,e^{i\theta_N})$ as a Schur polynomial:
\be
\chi_{[l_i]}(e^{i\theta_i})
\,=\,
\f{\det (e^{i\theta_j\,(l_i+N-i)})_{ij} }{\det (e^{i\theta_j\,(N-i)})_{ij}}\,.
\ee
Here, the highest weight is given by $l_1=l_2=J$ and this formula defines directly the generating functions for the spin expectation values (or equivalently the $V_i=2j_i$):
\be
\la e^{i\sum_k\theta_k E_k}\ra
\,=\,
\f{\chi_{[J,J,0,..]}(e^{i\theta_k})}{d_N[J]}
\quad=\,
%\quad\propto\,
\f1{d_N[J]}\,
\f{\det (e^{i\theta_j\,(J(\delta_{i1}+\delta_{i2})+N-i)})_{ij} }{\det (e^{i\theta_j\,(N-i)})_{ij}}\,,
\ee
where the normalization should be such that the expectation value of 1 is 1 (when $\theta_i=0$). The determinant at the denominator is exactly a Vandermonde determinant, while the numerator is a slight modification.

This formula contains all the traces of polynomials in the spins $j_i$'s. If we extend the formula to non-diagonal $\U(N)$ transformations (which we can diagonalize of course), we can generate the traces of all scalar products and powers in the basic vectors $\vV_i$.
As in the classical case, extracting these traces requires a careful differentiation of this generating function. It would be interesting if these traces of $\U(N)$ transformations could themselves be physically/geometrically relevant, for instance in the study of the dynamics of polyhedra and intertwiners.

%
%For instance, we get when all the $\theta_i$'s vanish but one:
%\be
%\la e^{i\theta E_k}\ra
%\,=\,
%\dots
%\ee

%%%%%
\subsection{Interpolating between Classical and Quantum Polyhedra: Coherent Intertwiner States}
%%%%%

To better understand the link between intertwined states and classical polyhedra, we can build coherent intertwined states peaked on classical framed polyhedra following \cite{un2, un3,un4}. 
Following the conventions of \cite{un3,un4}, one defines:

\begin{definition}
Given a set of spinors $z_{i}\in\C^{2N}$, we define the coherent intertwiner state $|J,\{z_{i}\}\ra$ in $\cR^J_{N}$ using the $\SU(2)$ creation operators $F^{\dagger}$:
\be
|J,\{z_{i}\}\ra
\,=\,
\f1{\sqrt{J!(J+1)!}}\,\left(
\f12 \sum_{ij}[z_{i}|z_{j}\ra\,F^{\dagger}_{ij}
\right)^J
\,|0\ra
\,=\,
\f1{\sqrt{J!(J+1)!}}\,\left(
\f12 \sum_{ij}[z_{i}|z_{j}\ra\,(a^{0\dagger}_{i}a^{1\dagger}_{j}-a^{1\dagger}_{i}a^{0\dagger}_{j})
\right)^J
\,|0\ra\,.
\ee
The scalar products $[z_{i}|z_{j}\ra$ are invariant under $\SU(2)$ rotations, so the intertwiner states are labeled by the orbits of spinors under global $\SU(2)$ transformations. Moreover, these scalar products are also invariant under global $\SL(2,\C)$ transformations, which map arbitrary sets of spinors to spinors satisfying the closure constraints. Thus the coherent intertwiner states are truly labeled by orbits of spinors under global $\SL(2,\C)$ transformations, that is points in the space of framed polyhedra (up to 3d rotations) $\cP^z_{N}=\C^{2N}/\SL(2,\C)=\C^{2N}//\SU(2)$ as we have seen in section \ref{param}.

\end{definition}

The main results established in \cite{un2}, and revisited in \cite{un3,un4}, are two key properties of these intertwiner coherent states: their formulation as group averaging of the tensor product of standard $\SU(2)$ coherent states, which establishes their geometrical interpretation as semi-classical polyhedron states, and then their coherence under the action of $\U(N)$. Or more precisely,

\begin{itemize}

\item {\bf Decomposition on $\SU(2)$ coherent states:}

\be
\f1{\sqrt{J!(J+1)!}}\,|J,\{z_{i}\}\ra
\,=\,
\sum_{J=\sum_{i}j_{i}}
\f1{\prod_{i}(2j_{i})!}
\int_{\SU(2)}dg\, \bigotimes_{i}g\,|j_{i},z_{i}\ra\,,
\ee
where we group average the tensor product of individual $\SU(2)$ coherent states living on each face and defined as:
$$
|j,z\ra
\,=\,
\f{(z^0a^{0\dagger}+z^1a^{1\dagger})^{2j}}{\sqrt{(2j)!}}\,|0\ra.
$$
These states living in $\cV^j$ are coherent under the action of $\SU(2)$ and thus can all be generated from the highest weight vector $|j,j\ra$ by acting with $\SU(2)$ transformations (up to a norm factor):
\be
g\,|j,z\ra\,=\,|j,gz\ra,
\qquad
\f1{\sqrt{\la z|z\ra^{2j}}}\,|j,z\ra
\,=\,
g(z)\,|j,j\ra,
\quad\textrm{with}\quad
g(z)
\,=\,
\f1{\sqrt{\la z|z\ra}}\,\mat{cc}{z^0 & -\bz^1 \\ z^1 & \bz^0}\,.
\ee
Finally, they are peaked on the classical vectors $\vV(z)=\la z|\vsigma|z\ra$:
\be
\f{\la j,z| \hat{V}^a |j,z\ra}{\la j,z|j,z\ra}
\,=\,
2j\,\f{\vV}{V}\,,
\ee
where the expectation value vector has the same direction as $\vV$ but is normalized to $2j$ in term of the spin carried by the state. The group average states for fixed individual spins $j_{i}$ were introduced earlier in \cite{intertwiner}.

Written as such, the coherent intertwiner states $|J,\{z_{i}\}\ra$ truly represent the quantized version of a classical framed polyhedron defined as a set of $N$ vectors or spinors up to $\SU(2)$ transformations.

\item {\bf Coherence under $\U(N)$ transformations:}

The action of $\U(N)$ transformations, generated by the operators $\hat{E}_{ij}$ at the quantum level, on the coherent intertwiner states amounts to the classical $\U(N)$-action on the set of spinors labeling the state:
\be
\hat{U}\,|J,{z_{i}}\ra
\,=\,
|J,{(Uz)_{i}}\ra,
\quad
U=e^{i h},
\quad
\hat{U}=e^{i\sum_{kl}h_{kl}\hat{E}_{kl}}\,,
\ee
for an arbitrary Hermitian matrix $h$. This ensures that the behavior of coherent intertwiner states is just the same as classical framed polyhedra. For instance, one can generate all coherent intertwines by acting with $\U(N)$ transformations on the bivalent intertwiner, just the same way as we could generate all (closed) framed polyhedra by acting with $\U(N)$ transformations on the totally squeezed configuration with only two non-trivial faces. This is the key property allowing us to take the trace over the Hilbert space of intertwines by an integral over the unitarity group $\U(N)$, similarly to the classical case. This is explained below in details.

\end{itemize}

Taking into account that the Hilbert space $\cR^J_{N}$ of intertwiners for fixed total sum of the spins is an irreducible representation of $\U(N)$, one can write the identity of that space as an integral over $\U(N)$ acting on a fixed state, say the bivalent intertwiner on the legs 1 and 2, which is exactly the integral over the coherent intertwiner states: 
\be
\id^J_N
\,=\,
\f1{J!(J+1)!}\,
\int_{\C^{2N}}
\prod_i\f{e^{-\la z_i|z_i\ra}d^4z_i}{\pi^2}\,
|J,\{z_i\}\ra\la J,\{z_i\}|\,.
\ee
A rigorous proof can be found in \cite{un2}, and then a simpler proof in \cite{un3,un4}. Basically, this comes from writing the identity on the larger Hilbert space $\cH^{(N)}$ in terms of the usual coherent states for the harmonic oscillators  and then projecting down on the subspace with fixed total number of quanta $J$.

We also compute the scalar product between two coherent intertwiners \cite{un2}:
\be
\la J,\{z_i\}|J,\{w_i\}\ra
\,=\,
\left(\det\sum_i|w_i\ra\la z_i|\right)^J
\,=\,
\left(\f12\sum_{i,j}[w_i|w_j\ra\la z_j|z_i]\right)^J\,.
\ee
For a single set of spinors, this also gives the norm of the coherent intertwiner state:
\be
\la J,\{z_i\}|J,\{z_i\}\ra
\,=\,
\left(\f12\sum_{i,j}|F_{ij}|^2\right)^J
\,=\,
\f1{2^{2J}}\,\left(\left(\sum_i V_i\right)^2-\left|\sum_i \vV_i\right|^2\right)^J\,.
\ee
We clearly see that the norm is maximized when the closure vector vanishes, $\vcC=\sum_i \vV_i=0$, and that we have a closed set of spinors thus corresponding to a true polyhedron. This is similarly to the result obtained in \cite{intertwiner}, which studied the saddle point approximation to the group averaging for fixed spins $j_{i}$ and show that a stationary point exists only if the vectors satisfy the closure constraints.

Combining these two formula, we can take the trace of the identity on $\cR^J_{N}$ and recover the dimension of this Hilbert space, which we can express either as a Gaussian integral over the spinors $z_{i}$ or as an integral over the vectors $\vV_{i}$:
\beq
d_N[J]=\tr \id^J_N
&=&
\f1{J!(J+1)!}\,
\int_{\C^{2N}}
\prod_i\f{e^{-\la z_i|z_i\ra}d^4z_i}{\pi^2}\,
\left(\det\sum_i|z_i\ra\la z_i|\right)^J \nn\\
&=&
\f1{2^{2J} J!(J+1)!}\,
\int_{\R^{3N}}\prod_i\f{e^{-V_i}d^3\vV_i}{4\pi V_i}\,
\left(\left(\sum_i V_i\right)^2-\left|\sum_i \vV_i\right|^2\right)^J\,.
\eeq
It is possible to check this formula directly by performing the integral over the vectors $\vV_{i}$ as done in \cite{counting}. This provides a interpretation of the space of intertwiners as almost-closed polyhedra, or fuzzy polyhedra, which are more and more peaked on true framed polyhedra (satisfying the closure constraints) as the total area $J$ grows. This fits with our earlier claim that the dimension of the Hilbert space $\cR^J_{N}$ behaves at leading order as the classical density of framed polyhedra as the total spin $J$ grows large for a fixed number of faces $N$.

We can know further the compute the trace of any unitary transformation, which provides an integral formula for the $\U(N)$-character given in terms of modified Vandermonde determinant in the previous section:
\beq
\chi_{[J,J,0,..]}(\hat{U})
=\tr^J_{N} \hat{U}
&=&
\f1{J!(J+1)!}\,
\int_{\C^{2N}}
\prod_i\f{e^{-\la z_i|z_i\ra}d^4z_i}{\pi^2}\,
\la J,{z_{i}}|J,{(Uz)_{i}}\ra \nn\\
&=&
\f1{J!(J+1)!}\,
\int_{\C^{2N}}
\prod_i\f{e^{-\la z_i|z_i\ra}d^4z_i}{\pi^2}\,
\left(\det\sum_i|(Uz)_i\ra\la z_i|\right)^J\,,
\eeq
with $U=e^{i h}$ and $\hat{U}=e^{i\sum_{kl}h_{kl}\hat{E}_{kl}}$ in terms of a Hermitian matrix $h$ as above.
In general, the determinant is a little messy:
$$
\det\sum_i|(Uz)_i\ra\la z_i|
=\f12 \sum_{ijkl}U_{ik}U_{jl}F_{kl}\bF_{ij}\,,
$$
from which we can compute the trace of $\hat{U}$ as Gaussian integral in the spinor variables.
It is however simpler when looking at diagonal unitary transformations $U=e^{i\sum_k \theta_kE_k}$, which act as multiplication by individual phases on each spinor:
\be
\la J,\{z_i\}| e^{i\sum_k \theta_kE_k}|J,\{z_i\}\ra
=
\la J,\{z_i\}|J,\{e^{i\theta_i}z_i\}\ra
=
\left[\f12\sum_{ij}e^{i(\theta_{i}+\theta_{j})}|F_{ij}|^2\right]^{J}
=
\f1{2^{2J}}\,\left(\left(\sum_i e^{i\theta_{i}}V_i\right)^2-\left|\sum_i e^{i\theta_{i}}\,\vV_i\right|^2\right)^{J}\,,\nn
\ee
from which we can compute the trace of $e^{i\sum_k \theta_kE_k}$ as an integral over the spinors or the vectors.

These traces are to be considered as the generating function for the traces of every polynomial operators in the $E$'s or $F$'s or $V$'s. Beyond this, it would be interesting to investigate if these unitary transformations can be seen as implementing the dynamics of the intertwiners for fixed boundary area (for instance, in the context of quantum black holes, see e.g. \cite{counting}), in which case these traces could be provided with a direct physical interpretation. On the other hand, we could also apply our machinery to other geometric operators. For example, an open issue is still to compute the exact spectrum of the (squared) volume operator (see nevertheless e.g. \cite{volumelqg}) and we could attempt to compute the traces of the powers of this operator, from which we could reconstruct its spectrum.

%%%%%%%%%%%%%%%%%
\section{Orthogonal Group Action on Polygons}
%%%%%%%%%%%%%%%%%

We will now look at structures in one dimension less and study the space of polygons. We will see that we can similarly define a phase space of framed polygons, also the frame now on each edge will be reduced to a sign $\pm$. Similarly to the case of polyhedra, we will characterize the space of framed polygons for fixed boundary perimeter as a representation of the orthogonal group $\O(N)$ instead of the unitary group. Working in 2d instead of 3d will also us to be more explicit in the reconstruction of the geometrical structure especially on the issue of gluing polygons together to form a two-dimensional discrete manifold. 

%%%%%
\subsection{Phase Space for Polygons}
%%%%%

The phase space structure for polygons is much simpler than for polyhedra. Instead of using spinors attached to each face of the polyhedron, we will attach a single complex variable to each edge of the polygon. Let us thus start with $\{z_i\}\in\C^{N}$ with $i=1..N$ for polygons with $N$ edges and postulate the following canonical Poisson bracket:
\be
\{z_j,\bz_k\}=-i\delta_{jk}.
\ee
This corresponds to a set of $N$ oscillators. Decomposing the complex variables in real and imaginary parts, $z_{j}=R_{j}+iI_{j}$, these variables look like the real version of the spinors used in the 3d case for polyhedra:
$$
z_{j}=R_{j}+iI_{j}\,\in\,\C
\quad\longrightarrow\quad
\mat{c}{R_{j}\\ I_{j}}\,\in\,\R^{2}\,.
$$
In these real variables, the canonical bracket reads $\{R_j,I_k\}=\f12\,\delta_{jk}$.
We define a closure constraint to ensure that the complex variables correspond the normal vectors to the edges of a true closed polygon:
\be
\cC=\sum_j z_j^2\,.
\ee
The normal vectors are 2-dimensional and correspond to the square of the complex variables:
\be
z_j^2=(R_j^2-I_j^2)+2iR_jI_j
\quad\longrightarrow\quad
\vn_{j}=
\mat{c}{R_j^2-I_j^2\\ 2R_jI_j}\,\in\,\R^{2}\,.
\ee
As we will see in details in a following section \ref{polygongeom}, this ensures that we can reconstruct a unique convex polygon, such that these normal vectors are orthogonal to the polygon's edges and their norm give the length of the corresponding edge.
Then  the perimeter of the polygon is given by the total energy of the oscillators:
\be
\cE=\sum_j |z_j|^2\,.
\ee

The real part of the closure constraint generates the multiplication by a global $\U(1)$ phase to all the complex variables:
\be
e^{i\theta\,\{\sum_{k}(R_k^2-I_k^2),\cdot\}}\,\mat{c}{R_{j}\\ I_{j}}
\,=\,
\mat{cc}{\cos\theta &\sin\theta\\-\sin\theta & \cos\theta}\,\mat{c}{R_{j}\\ I_{j}}\,,
\qquad
e^{i\theta\,\{\sum_{k}(R_k^2-I_k^2),\cdot\}}\,z_{j}
=e^{i\theta}\,z_{j}\,,
\ee
with $\theta\in\R$ and $e^{i\theta}\in\U(1)$.
The closure constraint $\cC=0$ is clearly invariant under this global phase transformation.

On the other hand, the imaginary part of $\cC$ generates global inverse re-scaling of the real and imaginary parts:
\be
e^{-\eta\,\{\sum_{k}2R_kI_k,\cdot\}}\,\mat{c}{R_{j}\\ I_{j}}
\,=\,
\mat{cc}{e^{\eta} &0\\0 & e^{-\eta}}\,\mat{c}{R_{j}\\ I_{j}}\,,
\ee
with $\eta\in\R$.
The closure constraint $\cC=0$ is not invariant under such transformations. On the contrary, we can use them map any set of complex variables onto a closed set satisfying $\cC=0$. Indeed, starting with an arbitrary value of $\sum_{j}z_{j}^{2}$, we first set the phase of this complex number by multiplication by a global phase. It is then purely real. Second, we set its real part to 0 by the inverse re-scaling which allows to balance the sum of the squares of the real parts and imaginary parts, $\sum_{k}R_k^2 =\sum_{k}I_k^2$.

Combining these two type of transformations generate the $\SL(2,\R)$ group. As we have just shown, these transformations allow to map any complex $N$-vector $(z_{k})_{k=1..N}\in\C^{N}$ onto one satisfying the closure constraint $\cC=0$. This is the equivalent of the $\SL(2,\C)$ transformations allowing to map arbitrary sets of $N$ spinors onto a closed framed polyhedron.

%%%%%
\subsection{The Orthogonal Group Action}
%%%%%

As before, we have the obvious $\U(N)$-action on $\C^N$ now generated by $E_{jk}=\bz_j z_k$:
\be
z_i
\,\longrightarrow\,
(Uz)_i=\sum_j U_{ij}z_j\,.
\ee
It allows to go from $\Om=(1,0,..,0)$ to an arbitrary vector in $\C^N$ up to a global scale factor:
\be
\Om\,\longrightarrow\, (U\Om)_i=U_{i1}\,,
\ee
which means that we are working on the unit complex sphere $U(N)/\U(N-1)\sim\cS^\C_{N-1}\sim\cS^\R_{2N-1}$. This could be an interesting testing ground for the case of the polyhedra since we know well the phenomenon of concentration of measure on the coset $U(N)/\U(N-1)$ as $N$ grows to infinity.
As expected, the $\U(N)$ action leaves invariant the perimeter $\cE$:
\be
\{E_{ij},\cE\}=0\,.
\ee
On the other hand, it does not commute with the closure constraint. However, we can introduce a linear combination of the $\u(N)$ generators that does:
\be
\{E_{ij},\cC\}=iz_jz_i\ne0,\qquad
\{e_{ij},\cC\}=0
\quad\textrm{with}\quad
e_{ij}\equiv
-i(E_{ij}-E_{ji})=-i(\bz_i z_j -z_i\bz_j)\,.
\ee
These form a $\o(N)$ algebra and actually generate the following action of $\O(N)$ on the complex $N$-vector:
%Let us consider the action of the orthogonal group on our complex variables, for $O\in \O(N)$
\be
z_i
\,\longrightarrow\,
(Oz)_i=\sum_j O_{ij}z_j\,.
\ee
It leaves invariant the perimeter and the closure constraint:
\beq
\sum_i |z_i|^2
&\longrightarrow&
\sum_{jk} \sum_i O_{ij}\bz_j \,O_{ik}z_k
\,=\,
\sum_{jk} \delta_{jk}\bz_j z_k=\sum_i |z_i|^2\,, \nn\\
\sum_i z_i^2
&\longrightarrow&
\sum_{jk} \sum_i O_{ij}z_j \,O_{ik}z_k
\,=\,
\sum_{jk} \delta_{jk}z_j z_k=\sum_i z_i^2\,. \nn
\eeq
It is interesting that this action is cyclic on the set of vectors satisfying the closure constraint. Indeed starting with the vector $\om=(1,i,0,..0)$ with ``unit" perimeter, $\cE=2$, and trivially satisfying the closure constraint, we perform an orthogonal transformation $O$, with $O_{ij}\in\R$ and ${}^tOO=\id$:
\be
\om_i\,\longrightarrow\,
(O\om)_i=O_{i1}+i O_{i2}\,.
\ee
Thus the orthogonal matrix gives the real and imaginary parts of the complex variables. Reciprocally, starting with a $N$-vector with unit perimeter $\cE=2$, we write both the fixed perimeter and closure constraints in terms of the real and imaginary parts of the complex coordinates, $z_i=R_i+iI_i$:
\be
\sum_i z_i^2=0
\,=\,
\sum_i (R_i^2-I_i^2)+2i\sum_i R_iI_i\,,
\ee
\be
\sum_i |z_i|^2=0
\,=\,
\sum_i (R_i^2+I_i^2)\,,
\ee
which mean that the real $N$-dimensional vectors $R_i$ and $I_i$ are orthonormal, and thus can be identified as the first two columns of an orthogonal matrix, $R_i=O_{i1}$ and $I_i=O_{i2}$.

At the end of the day, we will be able to describe averages on the ensemble of polygons as integrals over the orthogonal group. We will go further in this direction, although we can compute similarly to the unitary group case polynomial integrals and a Itzykson-Zuber formula over $\O(N)$. Instead we will focus on the geometrical interpretation of this phase space.

%%%%%
\subsection{Reconstructing Polygons}
\label{polygongeom}
%%%%%

Let us describe how we actually go from our complex $N$-dimensional vector $z_i$ satisfying the closure constraint to a real closed polygon (embedded in the flat plane).
We would like to interpret the complex variable $z_i$ as defining the normal to an edge of the polygon. More precisely, we identify the 2-vector $\vn_i\in\R^2$ normal to the edge $i$ to $z_i^2\in\C=\R^2$ with the edge length given by the modulus square  $l_i=|z_i|^2$.

The crucial step of the reconstruction is that we need to (re)-order the edges according to the angle of the normal vector $\vn_i$, or equivalently to the phase of $z_i^2$, so that the angles taken between 0 and $2\pi$ grows with the edge label $i$.
Starting arbitrarily the position $\vv_{1}$ of the first vertex of the polygon, say on the positive real axis for the sake of simplicity, we reconstruct the next vertex positions $\vv_{i}$  from $\vn_i=(\vv_{i+1}-\vv_i)\w \he_z$, or equivalently $(\vv_{i+1}-\vv_i)=\vn_i\w\he_z$, where $\he_{z}$ is the axis orthogonal to the plane. The closure constraint $\sum \vn_i=0$, equivalent to  $\sum z_i^2=0$, ensures that this procedure defines an actual polygon, with $\vv_{N+1}=\vv_{1}$.
Then we would like to first check that our polygon is convex, i.e. that the angle between two consecutive displacement vectors, or equivalently two consecutive normal vectors, is always at most 180 degrees. Mathematically, this translates to $\he_z\cdot[(\vv_{i+1}-\vv_i)\w(\vv_{i+1}-\vv_{i-1})]\ge 0$ for all $i$'s, or equivalently $\he_z\cdot(\vn_i\w\vn_{i-1})\ge 0$. Since we have ordered all the normals with growing angles, this convexity condition is automatically fulfilled, else the closure constraint can not be satisfied. This concludes he reconstruction procedure for the polygon, which is significantly simpler than for the polyhedron (see e.g. \cite{dona}).

\medskip

An interesting feature of our phase space construction is that the normal vectors, and thus the actual geometric polygon, is invariant under the change of sign of individual complex variables $z_i\arr -z_i$. This sign is nevertheless relevant when looking at the action of the orthogonal group on the polygons, i.e two sets of complex variables differing by signs but defining the same polygon will have different images under an orthogonal transformation. This sign ``ambiguity'' is the equivalent of the phase of the spinor variables for the polyhedra. Then we similarly introduce the notion of ``signed" polygons, corresponding to ``framed" polyhedra in the 3d case.
We expect this sign to be relevant when gluing the polygons together, just as the spinor phases played an essential role when gluing (framed) polyhedra into twisted geometries (encoding the Ashtekar-Barbero connection along the edge \cite{twisted1}). Let us look a bit more into this in the next section.

%%%%%%%%%%%%%%%%%
%\section{Deforming and Gluing Polygons}
%%%%%%%%%%%%%%%%%

%%%%%
\subsection{Deforming and Gluing Polygons}
%%%%%

Similarly to the spinor networks introduced as the classical phase space underlying the spin network of loop quantum gravity on a fixed graph \cite{spinor1,spinor2,twisted1,twisted2} and interpreted as twisted geometries, we would like to introduce its two-dimensional equivalent, corresponding to gluing polygons along a given graph. Let us consider an abstract (oriented) closed graph $\Gamma$. Around each vertex $v$ of the graph, we will consider one complex variable $z^{v}_{l}$ for each link $l$ attached to $v$. Reciprocally, for each link $l$ of the graph, we will have two complex variables $z^{s,t}_{l}$ for the two vertices bounding $l$, for its source and target vertices $v=s,t(l)$. We assume the canonical Poisson bracket for each complex variable, plus one closure constraint at each vertex, plus one length matching constraint on each link:
\be
\{z^{v}_{l},\bz^{v}_{l}\}=-i,
\qquad
\forall v,\,\,\,\sum_{l\ni v} (z^{v}_{l})^2=0,
\qquad
\forall l,\,\,\,|z^{s}_{l}|^{2}=|z^{t}_{l}|^{2}\,.
\ee
Geometrically, we have one polygon dual to its vertex. These polygons are then glued together edge by edge along each link of the graph. We call this a ``complex network'', where complex stands for the complex variables used on each edge instead of spinors.
Let us emphasize that although each polygon is constructed in a fixed plane as a purely two-dimensional object, the glued polygons are not to be thought as in the same plane. Indeed, we can think of each polygon as in its own tangent plane to the overall 2d discrete manifold, with its normal vectors defined in that tangent plane. 

This can be considered as a toy model for the gluing of polyhedra in 3d and the study of the deformation and dynamics of twisted geometries. There is no shape matching problem as in 3d, where we have an area matching between polyhedra ensuring that the two faces to be glued have the same area but not necessarily the same shape. But we have nevertheless the issue of reconstructing globally the dual cellular complex (i.e the ``triangulation''). A first look easily shows the problems. Let us start with a cellular complex for a 2d manifold, as a set of flat convex polygons glued together, and we consider the graph defined as its dual 1-skeleton and the corresponding complex network living on it encoding the geometrical data of the polygons. If we start modifying the normal vectors around the vertices, still making sure of not changing the closure constraints and the length matching constraints, we can change the angles of the normal vectors around each vertex and nothing a priori ensures that the ordering of the edges remains consistent to the original one and defines the same cellular complex as before. This seems to imply that deforming the complex network can induce a global change of the dual cellular complex (definition of the points dual to the faces/loops of the network) and probably of its topology. An alternative would be to fix the ordering of the edges around each vertex and not modify it while deforming the angles and norms of the normal vectors, thus allowing for the reconstruction of non-convex polygons. We face the same issue(s) in 3d considering the deformations of glued polyhedra and it would probably enlightening to explore the various possibilities and solve these problems in the 2d case studying the dynamics of glued polygons.

From this perspective, we plan to report in a separate paper the analysis of the dynamics of these complex networks and the issue of deforming the gluing of polygons. This should involve introducing some Hamiltonian constraints imposing some flatness conditions on the glued polygons and studying the dynamics of the 2d geometry induced by these constraints. Particular care should taken in understanding the role (if any) of the sign ambiguity between the complex variables $z$'s and the normal vectors $\vn$'s.

%How to get the flatness constraint and reconstruct the SU(2) holonomy??

%%%%%%%%%%%%%%%%%%
%\section{The Witt-algebra and a speculation on Black hole entropy}
%%%%%%%%%%%%%%%%%%
%
%{\bf MAYBE SKIP!}
%
%Write Witt algebra in terms of $\u(N)$ generators. For $n\ge 0$:
%\be
%L_{n\ge 0}=\sum_{k=0}^\infty k E_{k+n,k},
%\qquad
%L_{-n}=\sum_{k=0}^\infty (k+n) E_{k,k+n}\,.
%\ee
%Problem with adjoint\dots?!?
%
%Speculate about matrix model and black hole entropy

%%%%%%%%%%%%%%%%%
\section{Outlook: Matrix Models for Dynamical Polyhedra}
%\section{Matrix Models for Changing Polyhedra}
\label{outlook}
%%%%%%%%%%%%%%%%%

We would like to finish this paper on the possibility of defining and studying the dynamics of framed polyhedra in the $\U(N)$ framework presented here.
We first would like to define the kinetic term, encoding the dynamical degrees of freedom and their Poisson bracket. As the spinor variables have canonical brackets, it is natural to postulate the straightforward kinetic term for them, as assumed in \cite{spinor1}. Then keeping in mind the definition of the spinors in terms of the $\U(N)$ matrix $U$ and the total boundary area $2\lambda$, $z_{i}^{A}=\lambda\,U_{iA}$ for the face index $i=1..N$ and the spinor index $A=1,2$, we can express the kinetic term entirely in terms of the unitary matrix:
\be
S_{kin}
=\int dt\, -i\sum_{k}\la z_{k}|\pp_{t}z_{k} \ra
=\int dt\, -i\lambda\tr YU^{\dagger}\pp_{t}U
=\int dt\, +i\lambda\tr UY\pp_{t}U^{\dagger},
\qquad\textrm{with}\quad
Y=\mat{c|c}{\id_2&\\
\hline&0_{N-2}}\,.
\ee
We then have to define a Hamiltonian and potential. We can not require $\U(N)$ invariance as we would naturally do when dealing with matrix models else our model would collapse to a pure isotropic behavior independent of the unitary matrix $U$ and described only by the dynamics of the total boundary area $\lambda$. If we want some dynamics deforming the shape of the polyhedron, a natural possibility\footnotemark to explore is to introduce an external source given for example as a Hermitian matrix $X$ with a non-trivial potential and define the full action in terms of the unitary matrix $U$:
\be
S[U]
\,=\,
\int dt\, \left(-i\lambda\tr YU^{\dagger}\pp_{t}U - \lambda \tr YU^\dagger XU +V[X]\right)\,.
\ee
The equations of motion are straightforward to compute:
\be
\lambda UYU^{\dagger}=\pp_{X}V,
\qquad
(i\pp_{t}U +XU)Y=0\,,
\ee
with the equation of motion for $\lambda$ being trivial. The potential $V[X]$ should not be taken $\U(N)$-invariant, else the theory would be invariant under the action of the unitary group and thus trivial (with only the global area being dynamical). We should investigate how to choose a physically-relevant potential, for example in relation to cosmological mini-superspaces in quantum gravity (e.g. \cite{sfcosmo}). 

This would model the evolution of a given polyhedron, within a twisted geometry, coupled to some external excitation a priori taking into account the interaction of the polyhedron with the rest of the geometry. At the quantum level, this would model the dynamics of an intertwiner with the outside geometry in the context of loop quantum gravity. It would thus be interesting to solve these equations and see the various behavior of the evolution of $U$ in terms of the choice of potential $V[X]$.
\footnotetext{An alternative to a fixed potential would be to have a dynamical source $X$ with its own kinetic term and a non-trivial interaction. Such a model has been defined in \cite{spinor1} with two polyhedra coupled to each other and studied in the isotropic regime (all the links between the two polyhedra having the same dynamics). It would be interesting to push the analysis of such a model further and perhaps fully solve it.}

From this perspective, it would seem possible to model the dynamics of a (framed) polyhedron as a matrix model. It would be interesting to see if the tools of matrix models can be relevant in our framework, especially in the large $N$ limit when we would consider the refinement limit of our polyhedra which we expect to describe some continuous 2d surface (topologically equivalent to a 2-sphere).

%%%%%%%%%%%%%%%%%
\section*{Conclusion}
%%%%%%%%%%%%%%%%%

To summarize, we started by explaining how to extend the set of (convex) polyhedra with $N$ faces to a set of framed polyhedra by attaching the extra data of a $\U(1)$ phase to each face. This allows to see the set of framed polyhedra (up to 3d rotations) as the symplectic  quotient $\C^{2N}//\SU(2)$, defined as the set of collections of $N$ spinors satisfying closure constraints and up to $\SU(2)$ transformations. Discussing the various parametrization of this space, we showed that this symplectic manifold is equal to the quotient $\C^{2N}/\SL(2,\C)$, where we can use a $\SL(2,\C)$ transformation to map any collection of spinors onto one satisfying the closure constraints and thus defining a true geometric polyhedron. Furthermore, following the original work of \cite{un1,un2}, the space of framed polyhedra can be identified to the Grassmaniann space $\U(N)/\,(\SU(2)\times\U(N-2))$ with a natural action of the unitary group $\U(N)$ on framed polyhedra. It is important to emphasize that there is no $\U(N)$ action on polyhedra and that the extra phase attached to each face is essential to the construction. These $\U(N)$ transformations allow to generate any framed polyhedra from the totally squeezed configuration with only two non-trivial faces, and thus allow to go between any two framed polyhedra with equal total boundary area.
Such transformations could be instrumental in the study of geometric properties of polyhedron, especially in order to consistently explore the space of polyhedra (either analytically or numerically).

Using this $U(N)$ structure, we have shown how to compute the average value of geometrical observables, such as polynomials in the area of the faces and the angles between their planes (or normal vectors), can be computed as integrals over the unitary group. We have reviewed various formalisms allowing to compute consistently these polynomial integrals over $\U(N)$. Moreover, we have discussed how the Itzykson-Zuber integral can be used as a generating function for these averages. In short, this formula from matrix models contains all the information about the distribution of polyhedra and their shape with respect to the uniform Haar measure on $\U(N)$.

Moving on to the quantum level, we have explained how all the classical features are upgraded automatically upon a canonical quantization of the framed polyhedra phase space. This leads to the Hilbert space of $\SU(2)$-intertwiners (or equivalently $\SU(2)$-invariant states) and one can define semi-classical intertwiner states, that transform coherently under the action of $\U(N)$ and that are peaked on classical framed polyhedra. Then one can compute the trace of polynomial observables. Furthermore, similarly to the classical case, one can use  the character formula for $\U(N)$ group elements as a generating function for these polynomial traces and as a extension of the Itzykson-Zuber formula to the quantum case. We provide two different expressions for the $\U(N)$-character, either as a quotient of generalized Vandermonde determinants or as an Gaussian integral over almost-closed configurations of spinors (using the coherent intertwiner formalism).

\medskip

We also showed how we can describe polygons in a similar fashion, trading the unitary group for the orthogonal group and defining a phase space of ``signed'' polygons as the Grassmanniann space  $\O(N)/\,(\SO(2)\times\O(N-2))$. All the same techniques presented for the unitary group and polyhedra can then be straightforwardly translated to polygons. This lower-dimensional toy models allow to discuss more explicitly the geometrical reconstruction of polygons, which is simpler than for polyhedra, and we plan to investigate in the future the details of gluing these polygons together and the definition of consistent dynamics on the resulting 2d discrete manifolds. Quantizing the system, we would then obtain the dynamics of quantum surfaces. 

The present formalism might also turn out useful in discrete geometry, outside of the realm of quantum gravity, when studying polygons from a purely mathematical point of view. For example, it might be applicable to issues like the search for the largest small polygons, e.g. \cite{octagons}, or other similar problems of geometry.

\medskip

To conclude, we would like to mention a few directions that can be explored based on the present work:
\begin{itemize}

\item After having understood in details all the kinematics on the space of polyhedra (and polygons), we should move to the study of dynamics along the outline shortly discussed earlier in section \ref{outlook}.  In the context of loop quantum gravity, this would mean looking at the dynamics of a fundamental chunk of volume, either at the classical level with a polyhedron or at the quantum level with an intertwiner. In the present framework, it would be most natural to study a deformation dynamics, at fixed number of faces $N$ and fixed total boundary area $\lambda$, with the shape of the polyhedron entirely encoded in the unitary matrix $U$ (up to 3d rotations $\SU(2)$ and action of the stabilizer group $\U(N-2)$). 
It could first be interesting to check what would a free evolution on $\U(N)$, of the type $U[\tau]=\exp(i\tau h)$ for a fixed Hermitian matrix $h$, would produce in terms of polyhedra. Then we could deform such an evolution with a non-trivial potential.
A second step would be to include a non-trivial dynamics for the total area and number of faces, using the $F$-operators, to account for an expansion or contraction of the polyhedron.

From a physical perspective, it would be interesting to relate such dynamics to cosmological mini-superspace models (as attempted in \cite{sfcosmo}) or to quantum black hole models.

\item Here we have developed techniques to compute consistently the average or trace of polynomial observables. We have focused on the area observable, which is well-understood. It would be interesting to apply these methods to a less-understood operator, for instance the volume operator. Indeed the (squared) volume operator is cubic in the normal vectors (or equivalently in the $\su(2)$ generators at the quantum level) and determining its full spectrum is a yet-unsolved problem despite great progress \cite{volumelqg}. There have been a few very interesting approaches to this issue and hopefully we could get some extra information from the $\U(N)$ approach presented here.

\item For now, we have focused on a single polyhedron and then a single intertwiner at the quantum level. The next step is to generalize this to bounded regions of twisted geometries, i.e. to look at a bunch of polyhedra glued together and study their algebra of bulk and boundary deformations. This would be relevant for coarse-graining spin networks in loop quantum gravity and investigate the continuum limit of the theory (or at least, define it more rigorously at the kinematical level). Moreover these deformations should somehow be related to the action of diffeomorphisms on the twisted geometries and spin network states. By a quick glance at the corresponding structures, it appears that it will be possible to describe boundary deformations again by $\U(N)$ transformations, but in different representations than used in the case of the single polyhedron. In this context, it seems plausible to be able to describe the boundary dynamics of spin network states as some matrix models, which at a purely speculative level would open a possibility to a link to a conformal field theory description of the boundary of loop quantum gravity (maybe along the CFT description intertwiners already hinted in \cite{tetrahedron}).

\end{itemize}

%%%%%%%%%%%%%%%%%
\section*{Acknowledgment}
%%%%%%%%%%%%%%%%%

E.L. would like to acknowledge Mehdi Assanioussi for his collaboration on the calculations of entropy and polynomial integrals over the unitary group, as part of his Masters thesis research project at the Laboratoire de Physique ENS Lyon, ``Simple models for black holes in Loop Quantum Gravity'' (June 2012, Aix-Marseille university, France).

%%%%%%%%%%%%%%%%%%%%%%%%%%%%%%%%%%%%%%%%%%%%%%%%%%%%%
\appendix

%%%%%%%%%%%%%%%%%%
\section{Computing the density of polyhedra and correlations}
\label{brutal}
%%%%%%%%%%%%%%%%%%

We use the method of \cite{counting} to compute the density of polyhedra with $N$ faces and with fixed total area $2\lambda$ and the various averages over the ensemble of such polyhedra. We introduce the following (generating) function:
\be
\label{rho_def}
\rho_{N}[\lambda]
\,\equiv\,
8\pi
\int\prod_i^N\f{d^3\vV_i}{4\pi V_i}\,
\delta\left(\sum_k V_k-2\lambda\right)\,
\delta^{(3)}\left(\sum_k \vV_k\right)\,.
\ee
Following \cite{counting}, we Fourier-transform both sets of constraints and perform the integrals over the normal vectors $\vV_k$ (assuming $\eps>0$)~:
\be
\rho_{N}[\lambda]
\,=\,
8\pi\,e^{2\eps \lambda}
\int \f{dq}{2\pi}\int \f{d^3\vu}{(2\pi)^3}
\int\prod_i^N\f{d^3\vV_i}{4\pi V_i}\,
e^{-\eps V_i}e^{iqV_i}\,
e^{i\vu\cdot\sum_i \vV_i}
\,=\,
8\pi\,e^{2\eps \lambda}
\int \f{dq}{2\pi}\int \f{d^3\vu}{(2\pi)^3}
I(q,\vu)^N
\ee
$$
\textrm{with}\qquad
I(q,\vu)
\,=\,
\int\f{d^3\vV}{4\pi V}\,
e^{-\eps V}e^{iqV}e^{i\vu\cdot\vV}.
$$
The kernel $I(q,\vu)$ converges due to the regulator $\eps>0$. We first integrate over the angular part of $\vV$ and then over its norm:
\be
I(q,\vu)
\,=\,\int_0^{+\infty} V^2 \f{dV}{V}\,e^{-\eps V}e^{iqV}\,
\int_{\cS^2}\f{d^2\hat{V}}{4\pi}\,e^{i\vu\cdot\vV}
\,=\,\int_0^{+\infty} V dV\,e^{-\eps V}e^{iqV}\,\f{\sin uV}{uV}
\,=\,\f{1}{u^2-(q+i\eps)^2}\,.
\ee
This is exactly as the Feynman propagator in quantum field theory where $\vu$ plays the role of the momentum and $q$ the role of the mass.
We then perform the integrals over $\vu$ and finally over $q$:
\beq
\rho_{N}[\lambda]
&=&
8\pi\,e^{2\eps \lambda}\,
\int_{\R}\f{dq}{2\pi}\,e^{-2iq\lambda}\int_0^{+\infty} \f{4\pi u^2 du}{(2\pi)^3}\,
\f{1}{(u^2-(q+i\eps)^2)^N}\nn\\
&=&
%8\pi\,
%\int_{\R}\f{dq}{2\pi}\,e^{-2iq\lambda}\,
%\f{1}{2\pi^2}\f{\sqrt{\pi}}{4}\f{\Gamma(N-\f32)}{\Gamma(N)}(\eps-iq)^{3-2N}
%\,=\,
e^{2\eps \lambda}\,\int_{\R}\f{dq}{2\pi}\,e^{-2iq\lambda}\,
\f{(2N-4)!}{(N-1)!(N-2)!}\f{(\eps-iq)^{3-2N}}{2^{2N-4}}\nn\\
&=&
\f{\lambda^{2N-4}}{(N-1)!(N-2)!}\,.
\label{rho}
\eeq
And we recover the formula \eqref{rho0} for the volume of the space of (framed) polyhedra with $N$ faces and fixed total area.

\medskip

To extract the average of the norm of a normal vector, or its powers, we can differentiate with respect to $q$. For instance, we have:
\be
\la V_i\ra
\,=\,
\f{8\pi\,e^{2\eps \lambda}}{\rho_{N}}\,
\int \f{d^3\vu}{(2\pi)^3}\,\f{dq}{2\pi}\,
e^{-2iq\lambda}
I(q,\vu)^{N-1}\tI(q,\vu)\,,
\ee
with a modified kernel taking into account the insertion of the observable $V_i$,
$$
\tI(q,\vu)
=
\int \f{d^3\vV}{4\pi V}\,V\,e^{-\eps V}e^{iqV}e^{i\vu\cdot\vV}
=
-i\pp_q I(q,\vu)
=
\f{2(\eps-iq)}{(u^2-(q+i\eps)^2)^2}\,,
$$
which allows to get the expected result:
\be
\la V_i\ra
\,=\,
\f{1}{\rho_{N}}\f{2\lambda^{2N-3}}{N!(N-2)!}
\,=\,
\f {2\lambda}N
\,.
\ee
One can go further and compute averages of higher powers of the norm $V$ by differentiating more. For instance, in order to compute $\la V^2 \ra$, one insert the modified kernel \cite{counting}:
$$
\widetilde{\tI}(q,\vu)
=
\int \f{d^3\vV}{4\pi V}\,V^2\,e^{-\eps V}e^{iqV}e^{i\vu\cdot\vV}
=
-\pp_q^2 I(q,\vu)
=
\f{-2}{(u^2-(q+i\eps)^2)^2}+\f{8(\eps-iq)^2}{(u^2-(q+i\eps)^2)^3}\,.
$$
Using this technique, one can also compute the correlations between the norm of two different faces, by modifying the product of kernels in the integral from $I^N$ to $I^{N-2}\tI^2$.

Similarly, one differentiate with respect to the components of the vector $\vu$ in order to insert observables depending on the components of the normal vectors $\vV$. For instance, we can compute the correlations $\la V_i^aV_j^b\ra$ as:
\be
\rho^{ab}_{i}
\,\equiv\,
\la V_i^aV_i^b\ra
\,=\,
\f{8\pi\,e^{2\eps \lambda}}{\rho_{N}}\,
\int \f{d^3\vu}{(2\pi)^3}\,\f{dq}{2\pi}\,
e^{-2iq\lambda}
I(q,\vu)^{N-1}(-\pp_{u_b}\pp_{u_a}I(q,\vu))
\,=\,
2\delta^{ab}\f{\lambda^2}{N(N+1)}\,,
\ee
\be
\rho^{ab}_{i\ne j}
\,\equiv\,
\la V_i^aV_j^b\ra
\,=\,
\f{8\pi\,e^{2\eps \lambda}}{\rho_{N}}\,
\int \f{d^3\vu}{(2\pi)^3}\,\f{dq}{2\pi}\,
e^{-2iq\lambda}
I(q,\vu)^{N-2}(-i\pp_{u_b}I(q,\vu))(-i\pp_{u_a}I(q,\vu))
\,=\,
-2\delta^{ab}\f{\lambda^2}{N(N^2-1)}\,.
\ee
One can check as a consistency check that $\sum_{i,j}\la V_i^aV_j^b\ra=0$ as expected from the closure constraints.

Furthermore, one interesting observable is the second moment $\Theta^{ab}=\sum_i V_i^aV_i^b-\f13\delta^{ab}V_iV_i$, which characterizes the shape of the polyhedra and its deviation from the isotropic spherical distribution. From the result above, we have the trivial average $\la \Theta^{ab} \ra=0$. However, using the technique presented here, one can compute the width(s) or uncertainty of the distribution of the tensor $\Theta^{ab}$ around its vanishing mean value. This is rather lengthy calculation although straightforward, which we do not detail here. The final result is:
\be
\la \Theta^{ab}\Theta^{cd} \ra
\,=\,
\lambda^4\,
\f{4\left(4(N^2+N-2)\delta^{ab}\delta^{cd}-6(N-1)(\delta^{ac}\delta^{bd}+\delta^{ad}\delta^{bc})\right)}{3(N-1)N(N+1)(N+2)(N+3)}
\,.
\ee

%%%%%%%%%%%%%%%%%%%%%%%%%%%%%%%%%%%%%%%%%%%%%%%%%%%%%%%%%%%%%%%%%%%%%%%%%%%%
%%%%%%%%%%%%%%%%%%%%%%%%%%%%%%%%%%%%%%%%%%%%%%%%%%%%%%%%%%%%%%%%%%%%%%%%%%%

\end{document}